\documentclass[%
 reprint,
superscriptaddress,
 amsmath,amssymb,
 aps,
pra,
]{revtex4-2}

\usepackage{graphicx}
\usepackage{subcaption}
\usepackage{dcolumn}
\usepackage{bm}
\usepackage[colorlinks=true,linkcolor=blue,citecolor=blue]{hyperref}

\usepackage{amsthm}
\usepackage{braket}
\usepackage{dsfont}
\usepackage[dvipsnames,table,xcdraw]{xcolor}
\usepackage{tikz}
\usetikzlibrary{positioning}

\newtheorem{theorem}{Theorem}
\newtheorem{algorithm}{Algorithm}
\newtheorem{coro}{Corollary}
\newtheorem{conjecture}{Conjecture}

\newcommand{\Tr}{\text{Tr}}

\newcommand{\chn}{\mathcal{E}}
\newcommand{\chnd}{\mathcal{F}} 
\newcommand{\chnds}{\mathcal{F_{\mathsf{gen}}}} 
\newcommand{\chndh}{\mathcal{F_{\mathsf{Herm}}}} 

\newcommand{\poi}{\alpha} 
\renewcommand{\pof}{\beta} 

\newcommand{\rf}{\gamma} 
\newcommand{\rfd}{\Gamma} 

\newcommand{\rfsb}{\xi} 
\newcommand{\rfdsb}{\Xi} 

\newcommand{\tp}{\textsf{T}}
\newcommand{\ret}[2]{\hat{#1}_{#2}}
\newcommand{\retd}{\ret{\chn}{\rf}} 
\newcommand{\ir}[1]{\frac{1}{\sqrt{#1}}}
\newcommand{\depolarise}{\mathcal{D}}
\newcommand{\dephase}{\mathcal{Z}}

\DeclareMathOperator*{\argmin}{\arg\!\min}

\hypersetup{
    colorlinks=true,
    linkcolor=blue,
    filecolor=magenta,      
    urlcolor=blue,
    pdftitle={Overleaf Example},
    pdfpagemode=FullScreen,
    }

\begin{document}

\preprint{APS/123-QED}

\title{Schr\"odinger Bridges via the Hacking of Bayesian Priors \\ in Classical and Quantum Regimes}

\author{Clive Cenxin Aw}%
\email{ccx.aw@nus.edu.sg}
\affiliation{Centre for Quantum Technologies, National University of Singapore, 3 Science Drive 2, Singapore 117543, Singapore}

\author{Peter Sidajaya}%
\email{p.sidajaya@u.nus.edu}
\affiliation{Centre for Quantum Technologies, National University of Singapore, 3 Science Drive 2, Singapore 117543, Singapore}


\date{\today}

\begin{abstract}
Bayes’ rule is widely regarded as the canonical prescription for belief updating. We show, however, that one can arbitrarily preserve pre-specified beliefs while appearing to perform Bayesian updates via ``prior hacking'': engineering a reference prior distribution such that, for a fixed channel and evidence, the update matches a chosen target distribution. We prove that this is generically possible in both classical and quantum settings whenever Bayesian inversions are well-defined (with the Petz recovery map as the quantum analogue to Bayes' rule), and provide constructive algorithms for doing so. We further establish a duality between prior hacking and Schrödinger bridge problems (a key object in statistical physics with applications in generative modelling), yielding in the quantum setting a unique, inference-consistent selection among candidate bridges. This formally establishes the Bayes-like updating that Schrödinger bridges are performing with respect to the process as opposed to the reference prior, both in classical and quantum settings.
\end{abstract}

\maketitle
\section{Introduction}
It is difficult to imagine a clearer departure from rational inference than an agent who never alters their beliefs when presented with new data. It strikes us as epistemically pathological \cite{Battaly2020-dogma-philo, cassam2016vice-dogma-philo}, both in individual reasoning agents and more sociological contexts \cite{savion2009clinging-dogma-social,siebert2023effective-dogma-social,anderson1980perseverance-dogma-social}. This is why in robust accounts of inference, the responsiveness to evidence is a minimal requirement \cite{jacobs-changing-mind, chan2005revision-BK, jeffrey1990logic-BK}. 

Bayes’ rule, alongside its natural extensions to soft evidence \cite{chan2005revision-BK, jeffrey1990logic-BK,jaynes2003}, is widely taken to be \textit{the} canonical prescription for such rational responsiveness. It formalizes how a prior distribution should be transformed in the presence of new data accounting not only for relevant conditional probabilities but also one's \textit{reference prior} (which encodes the background information or, more generally, some initial best guess or belief). For these reasons and more, Bayes' rule is regarded as a baseline, if not a foundation, to robust updates of one's information. Its use is ubiquitous, underpinning social, statistical, and physical sciences while finding applications in machine learning and data engineering tasks \cite{feller1968introduction,jessop2018-bayes-app,chivers2024-bayes-app,tipping2003bayesian-bayes-app}. 

The work at hand examines an obscure cross-section between these two vastly different approaches to belief kinematics \cite{jeffrey1990logic-BK,zhou2014belief-BK,spohn1988ordinal-BK,van1980rational-BK,bradley2007kinematics-BK}. We ask under what circumstances is it possible to preserve one’s beliefs exactly while doing so under the guise of performing genuine Bayesian updates. We show that this can be achieved through \emph{prior hacking}: engineering a reference prior, defined for relevant transition probabilities (i.e. channels) and observed evidence, such that the ``update'' coincides with a pre-specified distribution. In this sense, one appears to update rationally via Bayes' rule while in fact remaining doxastically stationary; doing what is most pathological while appearing to do what is most sound, so to speak. 

As we will see in this work, it turns out prior hacking is generically possible in the classical setting, aside from a measure-zero set of channels. Specifically, it is \textit{always possible} to prior hack for a given channel, evidence and fixed conclusion whenever Bayesian update is well-defined (Theorem \ref{thm:classical-prior-hacking-channel-forall}). We provide a constructive algorithm (Algorithm \ref{algo:classical}) to obtain these hacking priors, when they exist. These are developed in Section \ref{sec:classical}.

This leads us to our central observation, which connects this question about belief kinematics to notions and applications in transport physics. We show how prior hacking is categorically parallel to the construction of Schr\"odinger bridges \cite{fortet1940resolution-SB-Org,leonard2013survey-SB-survey,chen2016relation-SB-Intro,chetrite2021schrodinger-SB-Org,orland2025schrodinger-SB-gen,georgiou2015positive} (Theorem \ref{thm:CSB-is-CPH}), a key object in statistical physics with growing applications in generative modelling and other computational tasks \cite{de2021diffusion-SB-ML,xie2024bridging-SB-ML,heng2024diffusion-SB-ML}. Specifically, we show that Schr\"odinger bridges can be seen as a dual to prior hacking---a kind of \emph{process hacking}. These are detailed in Section \ref{ssec:classical-SB}.

Section \ref{sec:quantum} extends our analysis to the quantum regime, focusing on a proposed quantum analogue to Bayes’ rule, the Petz Recovery map \cite{petz,petz1,barnum-knill,petzisking2022axioms, liu2026-extendedpetz,aw2026thesis}. We derive sufficient conditions under which prior hacking is possible for the Petz map (Theorem \ref{thm:quantum-surjective}, Algorithm \ref{algo:quantum}), covering the vast majority of cases. Section \ref{ssec:quantum-SB} elaborates on how the these constructions match algorithms for quantum Schr\"odinger bridges. Most notably, we show that correspondence to quantum prior hacking can help pick out unique quantum Schr\"odinger bridges from the existing continuum of candidates that exists in the current literature (Theorem \ref{thm:QSB-is-QPH}).  

We also complement the formal results with examples and geometric intuitions in both classical (Section \ref{sec:c-example-text}) and quantum settings (Section \ref{ssec:quantum-examples}). Section \ref{sec:concl} concludes with remarks on the foundational implications of these findings and directions for further investigation.

\begin{center}
    \begin{figure*}[ht]
    \centering
    \includegraphics[width=0.9\textwidth]{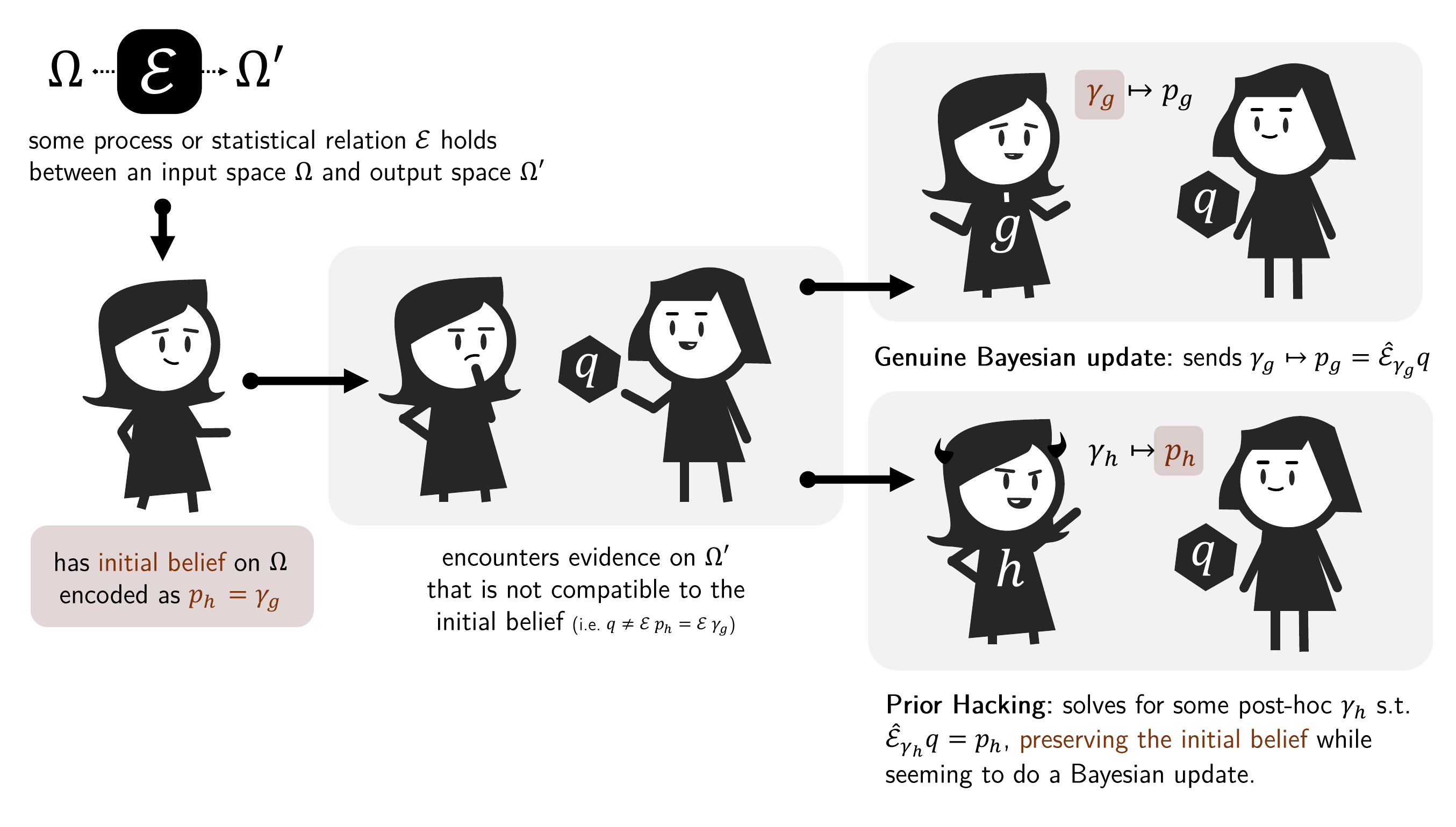}
    \caption{A cartoon illustration of prior hacking, which is detailed in Section \ref{sec:classical-prior-hacking-introduce}.}
    \label{fig:prior-hacking-illus}
\end{figure*}
\end{center} 

\section{Classical Setting} \label{sec:classical}
\subsection{The Classical Formalism} \label{ssec:classical-setting}
Before introducing what we mean by prior hacking, we delineate the formalism used---namely, that of stochastic matrices and Bayesian updating. We work with a discrete probability space $\Omega$ with $d$ elements where probability distributions $p:\Omega \longrightarrow [0,1]$ live. These can be represented by column probability vectors, which we also denote as $p$ to avoid notational encumbrance \footnote{While we have overloaded the symbol $p$ by assigning it to a function and a vector, in equations, the function $p(\cdot)$ will always appear with brackets, while the vector $p$ would not have any. This also applies to the other objects that we are going to introduce next.}. These exist in $\mathds{R}^d$ with entries $p(x)$, such that $ \forall p:\sum_{x} p(x)=1$ and $\forall x :p(x) \geq 0$. The set of all probability vectors in $d$ dimensions form a probability simplex $\Delta^{d-1}$ in $d-1$ dimensions.

Consider two probability spaces $\Omega$ and $\Omega'$ with dimensions $d$ and $d'$ and simplexes $\Delta^{d-1}$ and $\Delta^{d'-1}$. A stochastic process $\chn(\cdot|x)$ is a conditional probability function that assigns a probability to $\Omega'$ given that $x\in\Omega$ has been observed. We can transform a probability function $p$ in $\Delta^{d-1}$ to another probability function $q$ in $\Delta^{d'-1}$ by the equation
\begin{equation}\label{eq:apply-channel}
    [\chn p](y) = \sum_x\chn(y|x)p(x).
\end{equation}
which simply formalizes the forward propagation from the input distribution $p$ to its output distribution $\chn p$. Thus, the stochastic process may be represented as a left-stochastic (column-stochastic) matrix $\chn$ in $\mathds{R}^{d'\times d}$ with $\chn_{y,x}:=\chn(y|x)$.
Finally, the joint probability function $P$ which lives in $\Omega' \times \Omega$ is given by $P(x,y):=\chn(y|x)p(x)$.

\subsubsection*{Bayesian Updating}

Central to this work is the process of updating one's description of an input given some output evidence (sometimes called retrodiction when $\chn$ is a forward process in time). That is, given an evidence distribution $q(y)$, which is some output sample of a characterized process $\chn(y|x)$, what can one infer (or update on earlier inferences) about the input distribution of that process? 

One might propose to simply mathematically invert the process and update one's description of the input to $\chn^{-1}q$. But for various reasons, this approach is neither mathematically general, physically sound nor practically robust \footnote{$\chn^{-1}$, in general, is not a physical stochastic process. If $q$ lies outside the image of $\chn$, $\chn^{-1} q$ could lie outside the probability simplex of $\Omega$. Such an approach is also not robust to making inferences when the process is singular, nor does it adhere to the strong practical principle of not neglecting prior information we had about the overall population for which the sample was obtained.}. Rather, the statistically normative approach centres around Bayes' rule:
\begin{equation}\label{eq:classical-retrodiction}
    \retd(x|y):=\frac{\chn(y|x)\rf(x)}{[\chn\rf](y)}
\end{equation}
where $\rf$ is referred to as the reference prior (or simply the prior), a probability distribution which captures our initial information or belief on the input, and $\chn\rf$ is its forward propagation. It is important to emphasize that, generally, different reference priors lead to different Bayes maps \cite{liu2025quantifying,liu2026-extendedpetz}, as this feature is pertinent to the results in this work. Unlike $\chn^{-1}$, $\retd$ always exists as a stochastic process as long as singularities are avoided in the denominator (which can always be done by adding small perturbations of the process \cite{aw2026thesis}). In matrix form, we can write Eq.~\eqref{eq:classical-retrodiction} as
\begin{equation}\label{eq:classical-retrodiction-matrix}
    \retd=D_{\rf}\chn^\tp D_{\chn\rf}^{-1},
\end{equation}
where $D_a$ is the diagonal matrix with entries $(D_a)_{i,i}=a_i$ for a given vector $a$. Now, with this Bayes map, one updates their prior $\rf$ to a new conclusion $p$:
\begin{equation}
    \rf \quad \mapsto \quad  p=\retd \,  q.
\end{equation}
This is how a genuine Bayesian or Jeffrey's update works \cite{jeffrey1990logic-BK,zhou2014belief-BK}. With that, we introduce a reverse-engineering of this: prior hacking.


\subsection{Classical Prior Hacking}\label{sec:classical-prior-hacking-introduce}
As mentioned, for proper Bayesian updates, a genuine reference prior $\rf_g$ induces a genuine conclusion $p_g = \ret{\chn}{\rf_g}\,q$ via Bayes' rule on $\chn$ and an evidence $q$. Now, one can invert this by starting with a \textit{fixed} conclusion $p_h$ which induces a frankly conceptually perverse ``prior'' $\rf_h$ instead. Consider Figure \ref{fig:prior-hacking-illus} for an illustration of this. 

Notice that, doxastically or sequentially speaking, $p_h$ and $\rf_g$ are the same thing: some initial belief. But while $\rf_g$ gets updated through Bayes' rule upon evidence $q$, $p_h$ is a fixed conclusion that we do not want to change our minds about, and yet also want to make it seem like a conclusion that emerges from rational Bayesian inference upon receiving some evidence. This inversion of Bayesian update prompts some $\rf_h$, which we call the hacking (or hacked) prior, for which $\ret{\chn}{\rf_h}q=p_h$. Prior hacking refers to the process of finding $\rf_h$ that solves this relation for some tuple $(\chn,p_h,q)$ of that process and some evidence and fixed conclusion. 

With that, we drop the subscripts to avoid encumbrance and state the problem of prior hacking concisely: \textit{given} a tuple $(\chn,p,q)$, under what conditions do we have $\rf$ such that
\begin{equation}\label{eq:classical-prior-hacking-problem}
    \retd \,q=p?
\end{equation}
From here, unless stated otherwise, $\rf$ is taken to be the hacking prior that solves Eq.~\eqref{eq:classical-prior-hacking-problem}.

First, note that we can rewrite Eq.~\eqref{eq:classical-prior-hacking-problem} to
\begin{equation}\label{eq:classical-prior-hacking-equation}
    p\oslash[\chn^\tp (q\oslash[\chn\rf])]=\rf.
\end{equation}
In this work, $\odot$ and $\oslash$ are the Hadamard or elementwise product and division, respectively. We also remark in passing that $\oslash$ might not be always be well-defined due to singularities. This becomes relevant when discussing when prior hacking is possible. The details of the Hadamard product and the derivation of Eq.~\eqref{eq:classical-prior-hacking-equation} can be also found in Appendix~\ref{app:hadamard}.

\subsubsection*{Key Theorems for Classical Prior Hacking}
We move to some key results that emerge from this problem. As it turns out, Eq.~\eqref{eq:classical-prior-hacking-equation} is equivalent to a known problem in Statistics known as the Sinkhorn problem, the Iterative Proportional Fitting (IPF), or matrix scaling, among other names \cite{sinkhorn1964relationship,sinkhorn1967diagonal,sinkhorn1967concerning,garg2018recent,idel2016review,menon1968matrix,brualdi1968convex,menon1969spectrum}. The problem is such: Given an initial matrix $X$, we wish to find $a$ and $b$, such that the matrix $M=D_aXD_b$,
has a row sum $u$ and column sum $v$. That is,
\begin{align}
    Me&=u \label{eq:ipf-pair-1} \\
    M^\tp e&=v, \label{eq:ipf-pair-2}
\end{align}
where $e=(1,1,\cdots,1)^\tp$. The details of this problem can be found in Appendix~\ref{app:ipf-sinkhorn}. Suffice to say, the problems are mathematically equivalent with the substitutions $(X,u,v,a,b)\longrightarrow(\chn,p,q,\rfd,\rf)$, where
\begin{equation}\label{eq:rfd-def}
    \rfd = q \oslash \chn\rf.
\end{equation}
One may see $\rfd$ as an expression of divergence between the actual result (captured by the evidence received as $q$) and the postulated result (that is, $\chn \rf$ the propagation of ones' reference prior). With this, we can take a known theorem and apply it to the question of classical prior hacking.

\begin{theorem}\label{thm:classical-prior-hacking}
    \textbf{(the Sinkhorn problem \cite{idel2016review,menon1968matrix,brualdi1968convex,menon1969spectrum})} Given a tuple $(\chn,p,q)$, where $\chn$ is a $d \times d'$ transition matrix and probability distributions $p \in \Delta^{d-1}$ and $q \in \Delta^{d'-1}
    $, the following are equivalent:
    \begin{enumerate}
        \item There exists $\rf$ such that
            \begin{equation}
                \retd \, q=p,
            \end{equation}
            where $\retd$ is the Bayes map of $\chn$, based on $\rf$.
        \item There exists positive vectors $\rfd$ and $\rf$ such that $D_\rfd\chn D_\rf$ has row sums $q$ and column sums $p$.
        \item There exists another matrix $B$ with row sums $q$ and column sums $p$, i.e. $Be=q$ and $B^\tp e=p$, with the condition
            \begin{equation}
                B_{y,x} > 0 \quad \forall(y,x) \text{ s.t } \chn_{y,x}>0.
            \end{equation}
        \item For every $Y \subset \{1,\dots,d'\}$ and $X \subset \{1,\dots,d\}$, such that $\chn_{Y^\bot X}=0$,
            \begin{equation}
                \sum_{y\in Y}q_y \geq \sum_{x\in X}p_x,
            \end{equation}
            with equality holding for $\chn_{YX^\bot}=0$ and $S^\bot$ denotes the complement of the set $S$.
    \end{enumerate}
\end{theorem}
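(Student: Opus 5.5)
The plan is to prove the cycle of implications (1)$\Leftrightarrow$(2)$\Rightarrow$(3)$\Rightarrow$(4)$\Rightarrow$(2). The equivalence (1)$\Leftrightarrow$(2) is the algebraic reformulation already carried out in deriving Eq.~\eqref{eq:classical-prior-hacking-equation}.

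\textbf{(1)$\Leftrightarrow$(2).} Given $\rf$ with $\retd q=p$, set $\rfd=q\oslash\chn\rf$ as in Eq.~\eqref{eq:rfd-def} and form $M=D_\rfd\chn D_\rf$. Its row sums are $\rfd\odot(\chn\rf)=q$. Its column sums are $\rf\odot(\chn^\tp\rfd)$, which equals $p$ by Eq.~\eqref{eq:classical-prior-hacking-equation}. Conversely, suppose $D_\rfd\chn D_\rf$ has row sums $q$ and column sums $p$. Then $\rfd=q\oslash\chn\rf$, and the column-sum condition is exactly $\retd q=p$. The vector $\rf$ may be rescaled to lie in the simplex, since $\retd$ is invariant under $\rf\mapsto c\rf$ (with $\rfd\mapsto\rfd/c$). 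Positivity of $\rf,\rfd$ corresponds to $\retd$ being well defined. Here I will read the roles of rows and columns consistently with $\chn_{y,x}$, so that rows are indexed by $y$ and carry $q$, and columns are indexed by $x$ and carry $p$.

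\textbf{(2)$\Rightarrow$(3)$\Rightarrow$(4).} For (2)$\Rightarrow$(3), take $B=D_\rfd\chn D_\rf$. It has the same support as $\chn$ because the scalings are positive. For (3)$\Rightarrow$(4), take $Y,X$ with $\chn_{Y^\bot X}=0$. The mass in the columns $X$ is $\sum_{x\in X}p_x=\sum_{y,x\in X}B_{y,x}$. This mass lives only in rows $Y$, because $B_{Y^\bot X}=0$ by the support condition. (I need $B$ to vanish where $\chn$ does, which I will include in (3) as the pattern condition, i.e.\ $B$ has the same zero pattern.) Hence $\sum_{x\in X}p_x\le\sum_{y\in Y}q_y$. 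If in addition $\chn_{YX^\bot}=0$, then rows $Y$ likewise only feed columns $X$, which gives the reverse inequality and hence equality.

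\textbf{(4)$\Rightarrow$(2).} This is the main obstacle: the existence of a positive diagonal scaling. I would prove it in two stages.

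First, (4)$\Rightarrow$(3). I will show that a nonnegative $B$ supported on $\mathrm{supp}(\chn)$ with marginals $(q,p)$ exists by max-flow/min-cut, equivalently Hall's theorem for transportation polytopes. The network has source$\to y$ with capacity $q_y$, $x\to$sink with capacity $p_x$, and infinite capacities on support edges. A finite cut corresponds to sets $X$ and $N(X)\subseteq Y$, and the inequality in (4) says every cut has capacity at least $1$. Strict positivity on the full support then follows from the equality clause. The edges that are forced to zero in every feasible flow are exactly those crossing a tight pair $(Y,X)$, and the equality condition precludes them. Averaging feasible flows, each positive on some edge, gives a $B$ positive on all of $\mathrm{supp}(\chn)$.

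Second, (3)$\Rightarrow$(2). Minimize the strictly convex relative entropy $\sum_{y,x}B_{y,x}\log(B_{y,x}/\chn_{y,x})$ over the transportation polytope restricted to $\mathrm{supp}(\chn)$. Existence of a relative-interior feasible point, from (3), guarantees that the minimizer lies in the relative interior. The Lagrange conditions then give $B^\ast_{y,x}=\rfd_y\chn_{y,x}\rf_x$ with positive multipliers, which is (2). Equivalently, I could cite convergence of Sinkhorn iterations under (3), following \cite{idel2016review}.
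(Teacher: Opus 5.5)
Your (1)$\Leftrightarrow$(2) step is the reformulation in Appendix~\ref{app:ipf-sinkhorn}. For the other equivalences the paper only cites \cite{idel2016review}, whereas you give a self-contained route. Its last stage, (3)$\Rightarrow$(2) by minimising relative entropy over same-pattern plans (i.e.\ the bridge problem Eq.~\eqref{eq:optimisation-problem-sb}), is sound.

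The cycle does not close, however, because you use the equality clause of (4) in two opposite directions. In (3)$\Rightarrow$(4) you derive only ``$\chn_{YX^\bot}=0\Rightarrow$ equality'', which is all the statement literally says. In (4)$\Rightarrow$(3), you say the equality clause ``precludes'' a support edge lying in $Y\times X^\bot$ for a tight pair $(Y,X)$; that needs the converse, ``equality $\Rightarrow\chn_{YX^\bot}=0$''. With the one-directional clause, (4)$\Rightarrow$(3) is false. Take $\chn=\begin{pmatrix}1&a\\0&1-a\end{pmatrix}$ with $0<a<1$, and $p=q=(\tfrac12,\tfrac12)$. Every inequality in (4) holds. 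The only pairs with $\chn_{Y^\bot X}=\chn_{YX^\bot}=0$ are $(\emptyset,\emptyset)$ and the full pair, and there equality does hold. Yet a same-pattern $B$ has $B_{21}=0$, hence $B_{11}=p_1=q_1$ and so $B_{12}=0$, although $\chn_{12}=a>0$. Directly, $\retd q=p$ forces $\rf=(1,0)$, for which $[\chn\rf](2)=0$. You caught the analogous defect in (3), but not here. The Menon--Brualdi condition reads ``equality \emph{if and only if} $\chn_{YX^\bot}=0$'', i.e.\ strict inequality whenever $\chn_{YX^\bot}\neq0$.

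With that reading, (3)$\Rightarrow$(4) needs one more line. If $\chn_{y,x}>0$ with $y\in Y$, $x\notin X$, then $B_{y,x}>0$, so $\sum_Y q=\sum_X p+\sum_{Y\times X^\bot}B>\sum_X p$.

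The claim you assert without proof in (4)$\Rightarrow$(3) is the real content and must be argued. That claim is that an edge vanishing in every feasible plan lies in $Y\times X^\bot$ for a tight pair. One way:
\begin{itemize}
\item Take a feasible $B$ supported in $\mathrm{supp}(\chn)$ with $B_{y_0,x_0}=0$. Let $X_R,Y_R$ be the columns and rows reachable from $x_0$ by alternately stepping $x\to y$ when $B_{y,x}>0$ and $y\to x$ when $\chn_{y,x}>0$.
\item If $y_0\in Y_R$, push $\varepsilon$ around the alternating cycle closed by $(y_0,x_0)$. This gives a feasible plan that is positive at $(y_0,x_0)$.
\item Otherwise $\chn_{Y_R X_R^\bot}=0$, and all mass of $B$ in rows $Y_R$ or columns $X_R$ sits in $Y_R\times X_R$, so $\sum_{Y_R}q=\sum_{X_R}p$. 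Then $(Y,X)=(Y_R^\bot,X_R^\bot)$ has $\chn_{Y^\bot X}=0$ and equality, while $\chn_{y_0,x_0}>0$ lies in $\chn_{YX^\bot}$. This contradicts the strict clause.
\end{itemize}

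Finally, positivity of $\rf,\rfd$ does not ``correspond to $\retd$ being well defined''. For strictly positive $\chn$ and $p=(1,0)$, $\rf=p$ gives a well-defined $\retd$ with $\retd q=p$, yet no positive $\rf,\rfd$ meet (2). Your (1)$\Rightarrow$(2) needs the standing Sinkhorn assumption $p,q>0$. Under it, $p=\rf\odot\chn^{\tp}\rfd$ forces $\rf>0$, and then $\rfd=q\oslash\chn\rf>0$.
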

\begin{proof}
    As mentioned, in Appendix~\ref{app:ipf-sinkhorn} we show that the first and second points are equivalent. A modern review of the proofs of the other points can be found in \cite{idel2016review}. The only difference here is that we require $\rf$ to be a normalised probability distribution. Since the two diagonal matrices $D_\rfd,D_\rf$ are related by a scalar constant, we can always normalise $\rf$ into a probability distribution.
\end{proof}

While the third statement is probably the most mathematically straightforward, the fourth statement may provide better physical intuitions. For every subset $Y$ and $X$ such that $\chn_{Y^\bot X}=0$ means that all transitions coming out from $x\in X$ must go to $Y$. Because every $p(x)$ must go to $X$, the total mass of all the $q(y)$ must be at least the same as the total mass of all the $p(x)$. Additionally, when $\chn_{YX^\bot}=0$, this means that $(Y,X)$ and $(Y^\bot,X^\bot)$ are decoupled from each other, and thus the total mass in each partition must be conserved. Finally, of note is that when all the entries of $\chn$ are positive, then the matrix $B_{i,j}:=\frac{q(i) \,p(j)}{S}$ where $S=\Sigma_i \, q(i)=\Sigma_j \, p(j)=1$ will always satisfies the second existence condition, meaning that it is prior hackable \footnote{Additionally, $\rf$ is unique if and only if $\chn$ is fully indecomposable. For more information on full indecomposability, we refer the readers to \cite{idel2016review,brualdi1966diagonal}.}.

Now, Theorem \ref{thm:classical-prior-hacking} deals with the prior-hacking with respect to a given tuple of $(\chn,p,q)$. A natural follow-up is to consider, for a given $\chn$, what conditions make it such that it is prior-hackable \textit{for all} $p,q$. This brings us to our next theorem, 

\begin{theorem}\label{thm:classical-prior-hacking-channel-forall} For transition matrices $\chn$, the following statements are equivalent:
\begin{enumerate}
\item $\retd$ is well defined for all $\rf$, 
 \item $\forall(y,x) :\chn_{y,x} >0$,  
\item $\chn$ is prior-hackable for all probability inputs $p$ and outputs $q$. 
\end{enumerate}
\end{theorem}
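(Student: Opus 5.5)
The plan is to prove the cycle $1\Leftrightarrow 2$ directly, then $2\Rightarrow 3$ using Theorem \ref{thm:classical-prior-hacking}, and finally $3\Rightarrow 2$ by exhibiting an explicit obstruction. Throughout, I read ``$\retd$ is well defined'' as requiring $[\chn\rf](y)\neq 0$ for every $y$, so that Eq.~\eqref{eq:classical-retrodiction} has no vanishing denominators.

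For $1\Leftrightarrow 2$ I would test the priors at the vertices of the simplex. Taking $\rf=\delta_{x}$ gives $[\chn\rf](y)=\chn_{y,x}$, so if $\retd$ is well defined for every $\rf$, every entry of $\chn$ must be strictly positive. Conversely, if all $\chn_{y,x}>0$, then $[\chn\rf](y)=\sum_x\chn_{y,x}\rf(x)\geq \min_x \chn_{y,x}>0$ for every probability vector $\rf$. Hence no denominator vanishes.

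For $2\Rightarrow 3$ I would first handle $p,q$ with full support. In that case $B:=q\,p^\tp$ is entrywise positive and has $Be=q$ and $B^\tp e=p$, so item 3 of Theorem \ref{thm:classical-prior-hacking} gives a hacking prior.

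The boundary case is the step I expect to need care, because item 2 of that theorem asks for strictly positive scalings, which is impossible when $p$ has zeros. I would handle it by restricting to supports. Let $S=\mathrm{supp}(p)$ and $T=\mathrm{supp}(q)$, and let $\chn_{T S}$ be the corresponding positive submatrix. The previous argument applies to $\chn_{T S}$ with the renormalised restrictions of $p$ and $q$ (here I would just invoke the IPF scaling of Appendix~\ref{app:ipf-sinkhorn} on the submatrix). This yields a positive $\rf_S$, which I extend by $\rf(x)=0$ for $x\notin S$. Since $\chn$ is positive, $\chn\rf$ is still entrywise positive, so $\retd$ is well defined on all of $\Omega'$. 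We then have
\begin{equation*}
[\retd q](x)=\rf(x)\sum_{y\in T}\chn_{y,x}\,q(y)/[\chn\rf](y).
\end{equation*}
This vanishes off $S$ and reproduces $p$ on $S$ by construction, because the columns $y\notin T$ carry weight $q(y)=0$. As a cross-check, item 4 of Theorem \ref{thm:classical-prior-hacking} holds for every $(p,q)$ when $\chn>0$. The only zero blocks $\chn_{Y^\bot X}=0$ occur when $X=\emptyset$ or $Y$ is everything, and these give $0\geq0$ and $1\geq\sum_{x\in X}p_x$, with equality forced only when $X$ is everything.

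For $3\Rightarrow 2$ I would argue by contraposition. Suppose $\chn_{y_0,x_0}=0$, and choose $p=\delta_{x_0}$ and $q=\delta_{y_0}$. Then $\retd q=\retd(\cdot|y_0)\propto \chn_{y_0,\cdot}\odot\rf$. If $[\chn\rf](y_0)=0$ the update is undefined. Otherwise this vector has zero $x_0$-component, so it cannot equal $\delta_{x_0}$. Hence $\chn$ is not prior-hackable for this pair, which closes the cycle.
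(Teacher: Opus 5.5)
Your proof is correct. It uses the same skeleton as the paper's: vertex priors for $1\Leftrightarrow 2$, the product coupling $qp^\tp$ with Theorem~\ref{thm:classical-prior-hacking} for $2\Rightarrow 3$, and delta marginals $p=\delta_{x_0}$, $q=\delta_{y_0}$ for $3\Rightarrow 2$. It departs from the paper in two places, and both departures repair weaknesses in the paper's argument.

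For $2\Rightarrow 3$, the paper simply asserts that $B_{y,x}=p(x)q(y)$ is a positive matrix for every $p,q$. That is false as soon as $p$ or $q$ has a zero. Worse, if $p(x_1)=0$ and $\chn>0$, no $B$ can satisfy item 3 of Theorem~\ref{thm:classical-prior-hacking}, because column $x_1$ must vanish. So the boundary case cannot be reached through that theorem at all. Your restriction to the positive block $\chn_{TS}$, followed by zero-extension of $\rf$, supplies exactly the missing argument. One small point: $\chn_{TS}$ is not column-stochastic, so you need Sinkhorn's scaling theorem for general positive matrices. Equivalently, normalise the columns of $\chn_{TS}$ and absorb the factors into $\rf_S$; either fix is harmless.

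For $3\Rightarrow 2$, the paper again argues through item 3, noting that $B=\delta_{y_0}\delta_{x_0}^\tp$ violates it. But the implication from item 1 to item 3 fails for marginals without full support. For positive $\chn$, the prior $\rf=\delta_{x_0}$ does hack $(\delta_{x_0},\delta_{y_0})$, yet item 3 fails. So the paper's reasoning would equally ``prove'' that positive channels are not hackable. Your direct computation, $[\retd\,\delta_{y_0}](x_0)\propto\chn_{y_0,x_0}\rf(x_0)=0$, avoids Theorem~\ref{thm:classical-prior-hacking} entirely. It is also sound however one treats vanishing denominators at outputs $y$ with $q(y)=0$.
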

\begin{proof}
To avoid encumbering the main text, the proof can be found in Appendix \ref{app:thm-classical-hacking-forall}.
\end{proof}
With this, one might say that there is a vast embedded pluralism available to Bayesian reasoning. The set of transformations that omit prior hacking for some $p,q$ are measure-zero and are precisely those for which a Bayesian inversion is not always well defined. 

We note in passing that additional corollaries pertaining to concatenations of channels are included in Appendix \ref{app:concat-coros}. Corollary \ref{coro:concat-coro1} is a simple upshot that any concatenation of always prior-hackable channels is also always (i.e. for any $p,q$) prior-hackable (though the converse does not hold---prior-hackable concatenations may be decomposed into channels that are individually not always prior-hackable), while Corollary \ref{coro:concat-coro2} gives a definition of primitivity of Markov matrices \cite{herstein1954note-primitive-matrices} in explicitly Bayesian notions.

\subsubsection*{Obtaining the Hacked Prior}
While the existence of a hacking prior can be determined by Theorem~\ref{thm:classical-prior-hacking}, we have not yet described any method of finding a solution. To our knowledge, $\rf$ in Eq.~\eqref{eq:classical-prior-hacking-equation} cannot be written in a closed form solution. Fortunately, a corollary of Theorem~\ref{thm:classical-prior-hacking} is that an algorithm known as the RAS algorithm will converge to the solution of the problem given its existence \cite{sinkhorn1967concerning}. One expression of the algorithm can be obtained by simply taking Eq.~\eqref{eq:classical-prior-hacking-equation} and turning it into a fixed point iteration.
\begin{algorithm}\label{algo:classical}
    \textbf{(RAS algorithm)} The solution to Eq.~\eqref{eq:classical-prior-hacking-equation} can be obtained by randomly picking any initializing $\rf_0 \in \Delta^{d-1}$ with positive entries and iterating the following equation:
    \begin{equation}
    \rf_{i+1}=p\oslash[\chn^\emph{\tp}(q\oslash(\chn\rf_i))].
    \end{equation}
    As $i\rightarrow\infty$, $\rf_i$ will converge to the solution, if it exists.
\end{algorithm}
The way in which the algorithm works can be intuitively understood when we substitute back Eq.~\eqref{eq:rfd-def}. With this, an iteration of the algorithm is given by,
\begin{equation}
    \begin{tikzpicture}[
        node distance=1.0cm, 
        thick, 
        main node/.style={minimum size=0.8cm}
    ]
        \node[main node] (A) {$\rf_i$};
        \node[main node] (B) [right=of A] {$\chn\rf_i$};
        \node[main node] (C) [right=of B] {$\rfd_i$};
        \node[main node] (D) [right=of C] {$\chn^\tp\rfd_i$};
        \node[main node] (E) [right=of D] {$\rf_{i+1}$};

        \draw[->] (A) -- (B) node[midway, above] {$\chn$};
        \draw[->] (B) -- (C) node[midway, above] {$q\oslash\cdot$};
        \draw[->] (C) -- (D) node[midway, above] {$\chn^\tp$};
        \draw[->] (D) -- (E) node[midway, above] {$p\oslash\cdot$};

    \end{tikzpicture}.
\end{equation}
That is, it is an alternating series of propagation and counter propagation by the channel ($\chn$ and $\chn^\tp$) and element-wise scaling with the output and input states ($q$ and $p$).

\begin{figure}[t!]
    \centering
    \includegraphics[width=1.02\linewidth]{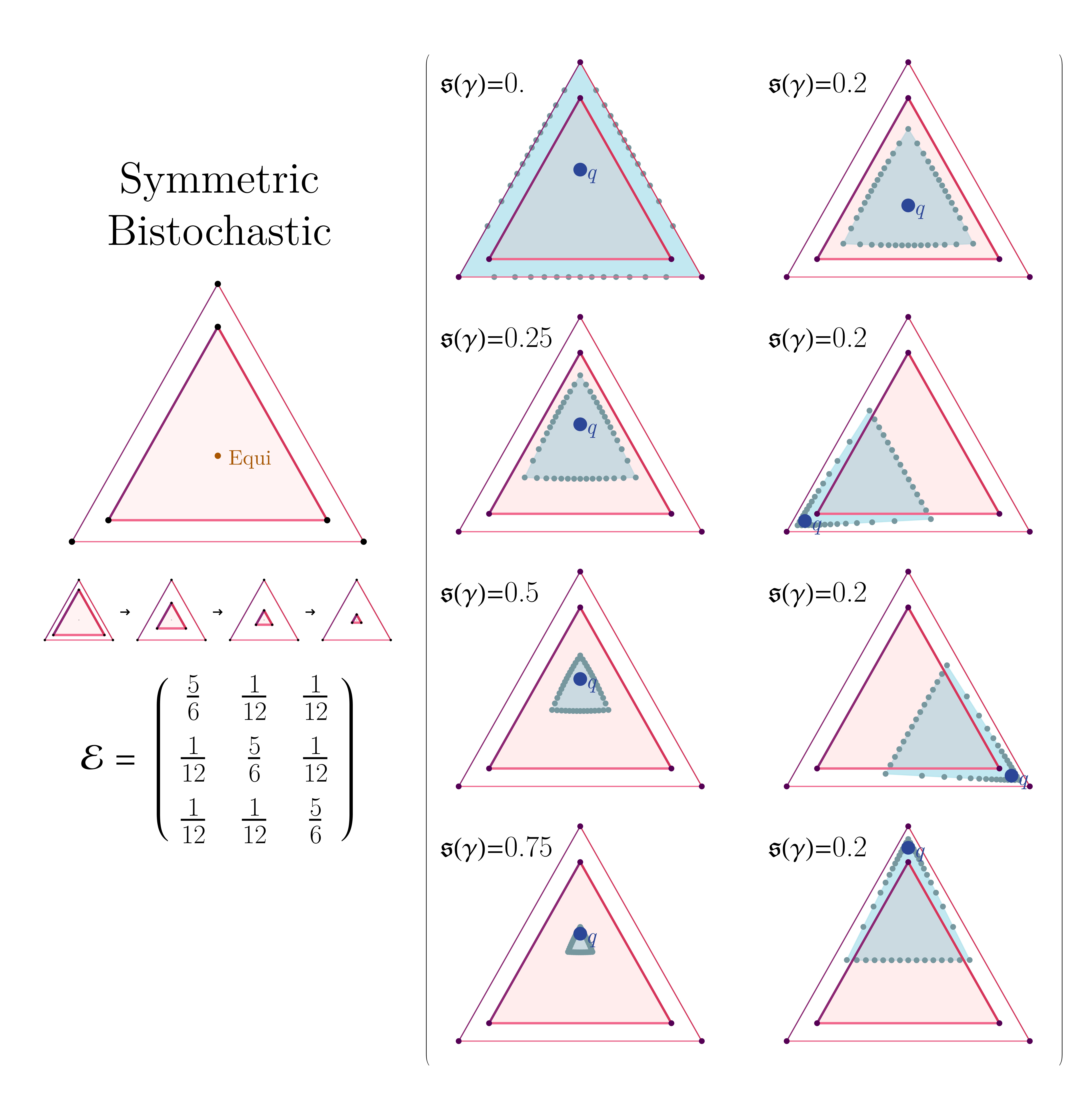}
    \caption{Image plots for a symmetric bistochastic channel acting on a trit space. For details on plot features, see Section \ref{sec:c-example-text}. Here, prior-hacking depends on $q$, in contrast to Figure \ref{fig:cerase}.}
    \label{fig:cdepolar}
\end{figure}

\begin{figure}[t!]
    \centering
    \includegraphics[width=1.02\linewidth]{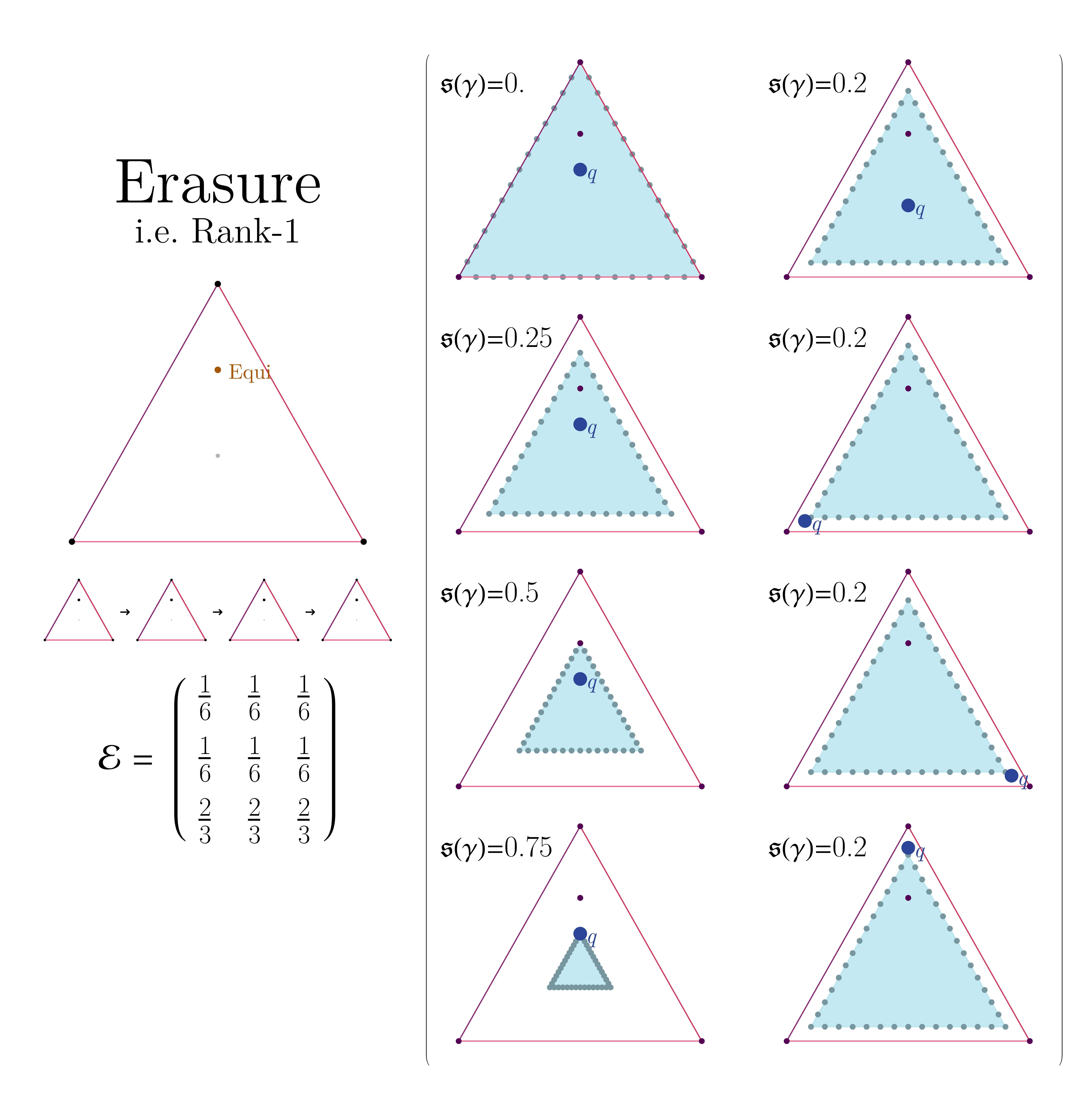}
    \caption{Image plots for a generic erasure channel acting on a trit space. For details, see Section \ref{sec:c-example-text}. Here, prior-hacking is shown to be independent of $q$, in contrast to Figure \ref{fig:cdepolar}.}
    \label{fig:cerase}
\end{figure}

\subsection{Prior Hacking for Notable Classes \\ of Classical Maps}\label{sec:c-example-text}
Before discussing our key results pertaining Schr\"odinger bridges, we briefly list key illustrative examples, with reference to figures, that showcase Theorem \ref{thm:classical-prior-hacking-channel-forall}'s geometric features. 
\begin{itemize}
    \item We first note the trivially reversible case for classical probability theory: \textit{permutation matrices}. These are matrices for which $\chn^{-1}=\chn^{\tp}$, being bistochastic matrices with only a single-$1$ entry for every row and column. They are \textit{never prior-hackable} except for the trivial case where $\chn p =q$. This is simply because $\forall\rf :\retd = \chn^{-1}$~\cite{AwBS, aw2024-tabletop}.
    
    \item On the extreme of irreversibility, we have \textit{erasure channels} (or discard-and-prepare channels), which immediately send every input to some stationary state $t$. That is, $\forall s \in \Delta^{d-1} : \chn s = t$. Aside from erasures to support-deficient outputs (for which Bayesian update is always undefined) \cite{aw2026thesis}, these channels are \textit{always} and trivially \textit{prior-hackable}. This is because Bayes' maps of erasure channels \textit{are} erasures to the prior $\forall q \in \Delta^{d-1} : \retd \, q = \rf$ \cite{AwBS, liu2025quantifying}. So setting $\rf =p$ automatically solves the prior hacking problem (see Figure \ref{fig:cerase}).
    
    \item \textit{Absorbing maps} are also worth mentioning as they are non-trivial exceptions to Theorem \ref{thm:classical-prior-hacking-channel-forall}'s conditions. Since details on these maps can be found in other resources \cite{liu2025quantifying,kemenysnell1969finite-absorbingmaps}, we simply note that these are maps that gradually send the probability weights of inputs inhabiting the ``transient space'' to a ``absorbing space'', where we have an identity channel $\chn(x|y)=\delta_{xy}$. For this reason, there are inevitably zero entries in the map, making some prior-hacking problems insoluble. This is illustrated in Figures \ref{fig:c01abs} and \ref{fig:c2abs}.
\end{itemize}

\subsubsection*{Details on Figures for Classical Examples}
In the figures mentioned above (as well as Figures \ref{fig:c-examples-in-app} included in the Appendix), the red filled triangles represent the image of the trit channels. The blue shaded regions are convex hulls of the image of $\retd \, q$ with $\rf$ as an input. In particular, we send in all the $\rf$ of informational entropy $\mathfrak s(\rf)$ and plot the outgoing $\retd \, q$, represented by the perimeter of the region. This may be referred to as the image of the hacking channel given $\mathfrak s(\rf)$. Meanwhile, the larger dark blue dot designates the evidence received $q$ and the dark orange dot in the leftmost plots designates the ``equilibrium'' state of $\chn$ (i.e. the channel's fixed point).

Note how Figures \ref{fig:cdepolar}, \ref{fig:cerase} and \ref{fig:cgen} have a full images for the hacking channel for pure priors (i.e. the blue shaded regions for $\mathfrak s(\rf)=0$ capture the whole simplex). This captures their total prior-hackability as per Theorem \ref{thm:classical-prior-hacking-channel-forall}. Meanwhile, Figures \ref{fig:c01abs}, \ref{fig:c2abs} and \ref{fig:cblocksym} always have incomplete images for their respective hacking channels, as they do not fulfil the conditions of Theorem \ref{thm:classical-prior-hacking-channel-forall}.  

\begin{figure}[t!]
    \centering
    \includegraphics[width=1.02\linewidth]{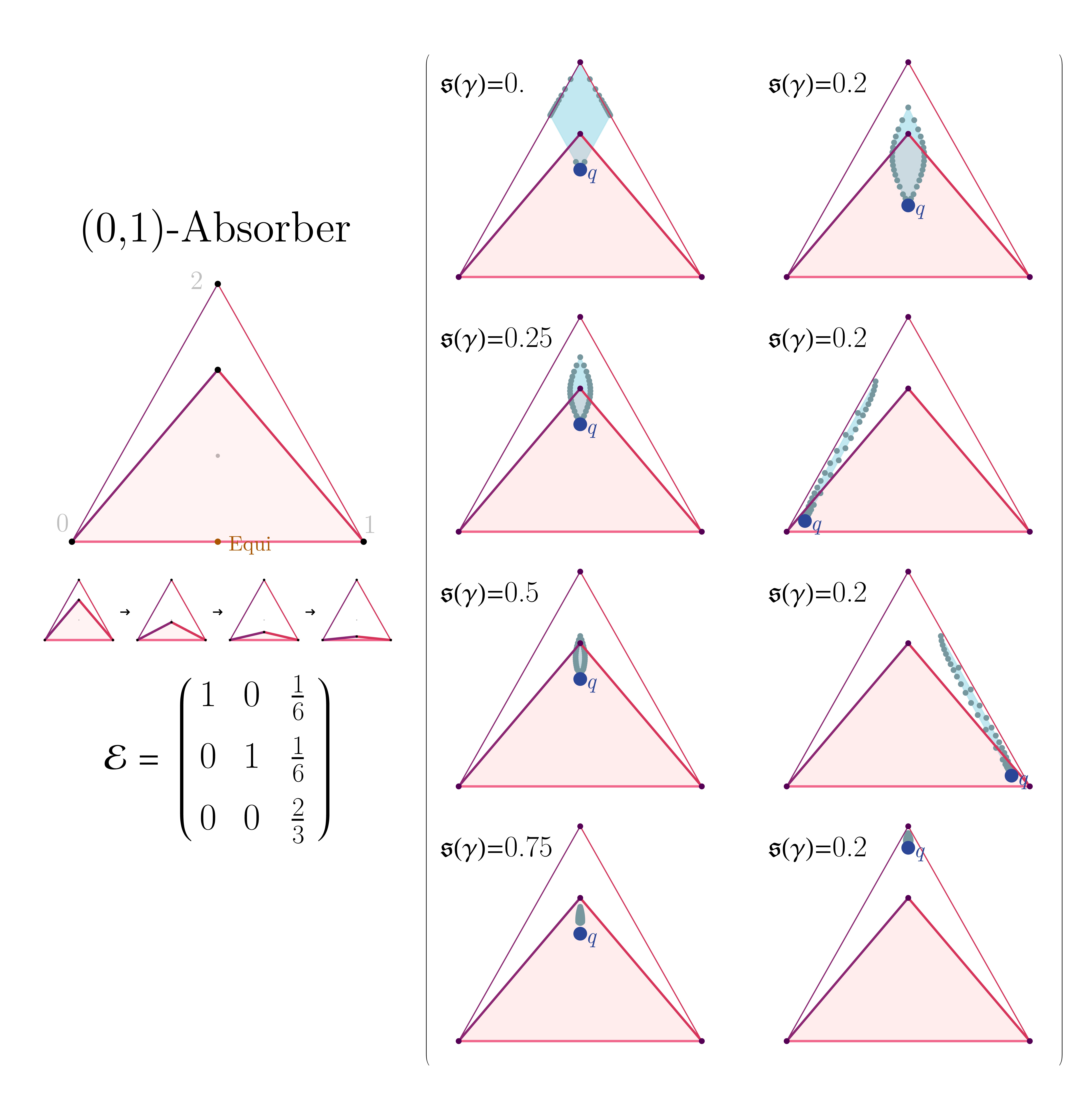}
    \caption{Image plots for a $(0,1)$-absorber channel acting on a trit space. For details, see Section \ref{sec:c-example-text}. Notice how one can only perform a Bayesian update toward states with weights on the $2$-state that are higher than that of $q$. Thus, prior-hacking is not always possible for any given $p,q$, as per Theorem \ref{thm:classical-prior-hacking-channel-forall}.}
    \label{fig:c01abs}
\end{figure}

\subsection{Classical Schrödinger Bridges \& Prior Hacking} \label{ssec:classical-SB}

The Sinkhorn problem has been rediscovered or found to be equivalent to other problems in various fields, such as transportation planning \cite{evans1970some,schneider1990comparative,de2024modelling}, contingency tables \cite{fienberg1970iterative,plane1982information}, social accounting matrices \cite{pyatt1985social,thorbecke2000use}, elections \cite{balinski1989algorithms}, preconditioning matrices in linear algebra \cite{osborne1960pre,bradley2010algorithms}, machine learning \cite{cuturi2013sinkhorn}, and even local hidden variable models \cite{aaronson2005quantum}. However, what is of particular interest to our quantum information setting is referred to as the Schrödinger bridge (SB) problem \cite{fortet1940resolution-SB-Org,leonard2013survey-SB-survey,chen2016relation-SB-Intro,chetrite2021schrodinger-SB-Org,orland2025schrodinger-SB-gen}, especially its discretised single step formulation \cite{georgiou2015positive}. This brings us to the primary connection of this work, between prior hacking and these objects in statistical physics. We begin first by reviewing key features of the Schrödinger bridge problem.

\subsubsection*{The Schrödinger Bridge Problem}

SB problems consist of an observation of initial and final distributions of particles in a system, $p(x)$ and $q(y)$, as well as a prior joint distribution $A(x,y)$. However, at least for non-trivial cases, $\sum_{y} A(x,y) \neq p(x)$ and $\sum_{x} A(x,y) \neq q(y)$. The task is to find a distribution $B$, the Schrödinger bridge, that is as close as $P$ whose marginals still satisfy both distributions $p,q$. The closeness here is defined by the KL-divergence.
\begin{equation}\label{eq:optimisation-problem-sb}
\begin{array}{r c l}
    B = & \displaystyle \argmin_{B'} & \textsf{D}(B'\| A) \\
      = & \displaystyle \argmin_{B'} & \displaystyle \sum_{x,y} B'(x,y) \log\left(\frac{B'(x,y)}{A(x,y)}\right). \\ 
        & \text{subject to}                & \displaystyle \sum_{x} B'(x,y)=q(y) \\
        &                            & \displaystyle \sum_{y} B'(x,y)=p(x)
\end{array}
\end{equation}
The solution to this problem, it turns out, can be found by a method equivalent to the Sinkhorn problem. First, we have to find $\rfd$ and $\rf$ such that the following equations are satisfied:
\begin{align}
    &[\chn^\tp\rfd](x) \, \rf(x) = p(x) \label{eq:sb-pair-1-function}\\
    &[\chn\rf](y) \, \rfd(y) = q(y), \label{eq:sb-pair-2-function}
\end{align}
where
\begin{align}
    &[\chn^\tp\rfd](x) = \sum_{y} \chn(y|x)\rfd(y) \\
    &[\chn\rf](y) = \sum_{x} \chn(y|x)\rf(x).
\end{align}
Here $\chn(y | x)=A(x,y)/\sum_{y}A(x,y)$ is the transition probability induced by the prior distribution. Note that, firstly, in this setting neither $\rfd$ nor $\rf$ has to be a probability distribution. Instead, $\rf$ is called the forward scaling potential and $\rfd$ the backward scaling potential. However, they are still related by Eq.~\eqref{eq:rfd-def} that was defined for the prior hacking problem, and are thus not independent of each other. Secondly, the marginal $\sum_{y}A(x,y)$ does not matter in this problem (see Appendix \ref{app:optimisation-sb-quantum}). 

The solution (i.e. the SB) would then be given by
\begin{equation}
    B (x,y) = \rf(x) \chn(y|x) \rfd(y),
\end{equation}
and the optimum transition probability $\chnd(y|x)=B(x,y)/p(x)$ is given by
\begin{equation}\label{eq:SB-transition-entries}
    \chnd(y|x) = \chn(y|x) \frac{\rfd(y)}{\chn^\tp\rfd(x)}.
\end{equation}

\subsubsection*{Connection to Prior Hacking}
One immediately notices Eq.~\eqref{eq:SB-transition-entries}'s resemblance to Bayes' rule as per Eq.~\eqref{eq:classical-retrodiction}. Indeed, as alluded to already, the SB problem has a surprisingly close relation to the prior hacking problem. To see this, we first write both Eqs.~\eqref{eq:sb-pair-1-function} and \eqref{eq:sb-pair-2-function} in matrix notation as
\begin{align}
    &(\chn^\tp\rfd) \odot \rf = p, \label{eq:sb-pair-1}\\
    &\rfd \odot (\chn\rf) = q. \label{eq:sb-pair-2}
\end{align}
If the objects are nonzero, these give nothing but Eq.~\eqref{eq:classical-prior-hacking-equation}, which is simply the prior hacking problem.

With this we note some key connections between what has been discussed. Firstly, solving the classical, single-step discrete SB problem is mathematically equivalent to solving the Sinkhorn problem and the prior hacking problem. In fact, the Eqs.~\eqref{eq:sb-pair-1} and \eqref{eq:sb-pair-2} are equivalent to the Eqs.~\eqref{eq:ipf-pair-1} and \eqref{eq:ipf-pair-2} that form the constraints that define the Sinkhorn problem. Secondly, while $\rf$ in the SB is not necessarily a probability distribution, it can always be normalised into a vector due to the freedom in the scalar factor of $\rf$ and $\rfd$. 

This naturally leads to our third point. While the Sinkhorn and the SB problem invokes $\rfd$ and $\rf$ under a rather abstract interpretation, one may argue that the prior-hacking scenario endows additional conceptual clarity for these objects. $\rf$ under scalar normalization is nothing but the reference prior, whereas $\rfd$ of Eq.~\eqref{eq:rfd-def} is the aforementioned divergence between the evidence $q$ and the postulated result $\chn\rf$.


Finally, we note that via Eq.~\eqref{eq:SB-transition-entries} the SB transition matrix may be expressed as
\begin{equation}\label{eq:classical-SB-transit}
    \chnd = D_\rfd\chn D_{\chn^\tp\rfd}^{-1}.
\end{equation}
As alluded, there is much resemblance to Bayes maps as written in Eq.~\eqref{eq:classical-retrodiction-matrix}. However, a key difference is that a transpose is not taken on $\chn$ and so $\chnd$ still maps $\Omega \longrightarrow \Omega'$, as opposed to $\Omega' \longrightarrow \Omega$ for $\retd$. Rather, $\chnd$ is an adjustment of the forward map $\chn$ that gives the correct output $q$ given an initial input $p$. This brings us to a key observation of this work:
\begin{theorem}\label{thm:CSB-is-CPH}
    For a Schr\"odinger bridge or prior hacking problem characterized by $(\chn, p,q)$, the Bayesian inversion of $\chn$ based on the hacked prior $\rf$ and the Bayes inverse of the Schr\"odinger bridge based on $p$ are always the same object:
    \begin{equation}\label{eq:CSB-is-CPH}
    \ret{\chnd}{p} = \retd
    \end{equation}
\end{theorem}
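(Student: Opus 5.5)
The plan is a direct matrix computation using the closed form \eqref{eq:classical-retrodiction-matrix} for Bayes maps, applied to both sides of \eqref{eq:CSB-is-CPH}. We express $\ret{\chnd}{p}$ through the bridge transition matrix \eqref{eq:classical-SB-transit} and simplify with the two scaling equations \eqref{eq:sb-pair-1} and \eqref{eq:sb-pair-2}. Throughout, we assume a solution $(\rf,\rfd)$ exists, for instance under the conditions of Theorem~\ref{thm:classical-prior-hacking}. We also assume the relevant vectors are strictly positive on the supports involved, so that the inverses of the diagonal matrices make sense.

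\textbf{Step 1: the output of the bridge.} Since $\chnd$ is built so that $B(x,y)=\chnd(y|x)p(x)$ has marginals $p$ and $q$, we have $\chnd p = q$. We can also verify this directly. First, $D_{\chn^\tp\rfd}^{-1}p=\rf$ by \eqref{eq:sb-pair-1}. Then
\begin{equation}
\chnd p = D_\rfd \chn D_{\chn^\tp\rfd}^{-1} p = D_\rfd \chn \rf = \rfd\odot(\chn\rf) = q,
\end{equation}
where the last equality is \eqref{eq:sb-pair-2}.

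\textbf{Step 2: expand the bridge's Bayes map.} By \eqref{eq:classical-retrodiction-matrix} and Step 1,
\begin{equation}
\ret{\chnd}{p} = D_p\,\chnd^\tp\, D_{\chnd p}^{-1} = D_p\, D_{\chn^\tp\rfd}^{-1}\,\chn^\tp D_\rfd\, D_q^{-1}.
\end{equation}
Diagonal matrices commute. From \eqref{eq:sb-pair-1} we get $D_pD_{\chn^\tp\rfd}^{-1}=D_\rf$. From \eqref{eq:sb-pair-2} we get $D_\rfd D_q^{-1}=D_{\chn\rf}^{-1}$, which is just \eqref{eq:rfd-def}, $\rfd=q\oslash\chn\rf$. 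Substituting these gives
\begin{equation}
\ret{\chnd}{p}=D_\rf\chn^\tp D_{\chn\rf}^{-1}=\retd,
\end{equation}
which is the claim.

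\textbf{Main obstacle: zero entries.} The algebra above is routine. The care needed is in the singular cases, where $p$, $q$, $\chn\rf$ or $\chn^\tp\rfd$ have zero entries and the Hadamard divisions are ill-defined. I would handle this in two parts.

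\emph{Well-definedness on the relevant supports.} Theorem~\ref{thm:classical-prior-hacking}, item 2, supplies positive $\rf$ and $\rfd$. It follows that $\chn\rf$ is positive wherever $\chn$ has a nonzero row, and $\chn^\tp\rfd$ is positive wherever $\chn$ has a nonzero column. Hence both Bayes maps are defined on $\operatorname{supp} q$.

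\emph{Entrywise check.} Where $q(y)=0$, I would use the paper's convention that Bayes maps are compared only where defined, or after a small perturbation. As an alternative, one can verify the identity entrywise:
\begin{equation}
\ret{\chnd}{p}(x|y)=\frac{p(x)\chn(y|x)\rfd(y)}{[\chn^\tp\rfd](x)\,q(y)}=\frac{\rf(x)\chn(y|x)}{[\chn\rf](y)}.
\end{equation}
This is valid whenever $q(y)>0$, and it avoids matrix inverses entirely.

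Finally, the normalisation freedom $(\rf,\rfd)\mapsto(c\rf,\rfd/c)$ leaves both $\chnd$ and $\retd$ unchanged. So normalising $\rf$ to a probability vector, as prior hacking requires, does not affect the identity.
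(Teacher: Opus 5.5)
Your proof is correct and follows essentially the same route as the paper's: expand $\ret{\chnd}{p}=D_p\chnd^\tp D_{\chnd p}^{-1}$ via Eq.~\eqref{eq:classical-SB-transit}, use $\chnd p=q$, and collapse the diagonal factors with Eqs.~\eqref{eq:sb-pair-1} and \eqref{eq:sb-pair-2}. The paper takes $\chnd p=q$ from the definition of the bridge and leaves the support and normalisation issues implicit; your explicit check of $\chnd p=q$, the entrywise version on the support of $q$, and the remark on the scaling freedom $(\rf,\rfd)\mapsto(c\rf,\rfd/c)$ are careful refinements of the same argument.
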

\begin{proof}
    Applying Bayesian inversion to the adjusted forward channel $\chnd$ with the prior $p$, we get:
\begin{align}
    \ret{\chnd}{p} &= D_{p} \chnd^\tp D^{-1}_{\chnd p} \nonumber \\
    &= D_{p} D_{\chn^\tp\rfd}^{-1}\chn^\tp D_\rfd D^{-1}_{\chnd p} \nonumber \\
    &= D_{p} D_{\chn^\tp\rfd}^{-1}\chn^\tp D_\rfd D^{-1}_{q} \nonumber \\
    &= D_{\rf} \chn^\tp D_{\chn \rf}^{-1},
\end{align}
where we have applied Eqs.~\eqref{eq:sb-pair-1} and \eqref{eq:sb-pair-2} and $\chnd p=q$ is given by the definition of the SB. This gives nothing but Eq.~\eqref{eq:CSB-is-CPH}.
\end{proof}
This equivalence is noteworthy. It may be understood as in terms of two routes seeking one same correspondence between $p$ and $q$. $\ret{\chnd}{p} \, q = p,$ is guaranteed because Bayesian inference always recovers the initial guess if it was accurate (i.e. $\forall (\chn,\rf) : \retd \chn\gamma = \rf$) \cite{petzisking2022axioms}. Meanwhile, the goal of prior hacking is also this match via $ \hat{\chn}_{\rf}\, q = p$, though for very different reasons, as $q\neq \chn\rf$ and $\rf \neq p$. In the Schrödinger bridge problem, instead of hacking or modulating the \textit{prior} $\rf$ to match $p$ to $q$, one hacks or modulates the \textit{process} $\chn$ to do the same.

Finally, we note that once a prior hacking problem $(\chn,p,q)$ is solved, one would also have trivially solved for $\chnd$ via Eq.~\eqref{eq:CSB-is-CPH} and \eqref{eq:classical-retrodiction}:
\begin{equation}\label{eq:CSB-from-CPH}
    \forall(x,y): \; \chnd(y|x)=\retd(x|y)\frac{q(y)}{p(x)}.
\end{equation}

\begin{figure}[h]
    \centering
    \begin{tikzpicture}[
        node distance=2cm, 
        thick, 
        main node/.style={minimum size=1cm}
    ]
        \node[main node] (A) {$\chn$};
        \node[main node] (B) [right=of A] {$\chnd$};
        \node[main node] (C) [below=of A] {$\ret{\chn}{\rf'}$};
        \node[main node] (D) [below=of B] {$\retd=\ret{\chnd}{p}$};

        \draw[->] (A) -- (B) node[midway, above] {Schrödinger Bridge};
        \draw[<->] (D) -- (B) node[midway, right, align=center] {Bayes Inference\\with prior $p$};
        \draw[->] (C) -- (D) node[midway, below, align=center] {Prior Hacking\\$\rf'\mapsto\rf$};
        \draw[<->] (A) -- (C) node[midway, left, align=center] {Bayes Inference\\with arbitrary \\ prior $\rf'$};

    \end{tikzpicture}
    \caption{The relationship between prior hacking and Schrödinger bridge (Theorem \ref{thm:CSB-is-CPH}). The map obtained through prior hacking, $\retd$, is the same map as the reverse map of the Schrödinger bridge with reference prior $p$, $\ret{\chnd}{p}$. With $p \to \rho$ and Bayes' rule upgraded to the Petz recovery, this commutativity diagram applies also to the quantum regime (Theorem \ref{thm:QSB-is-QPH}), but for the inference-consistent QSB only.}
\end{figure}
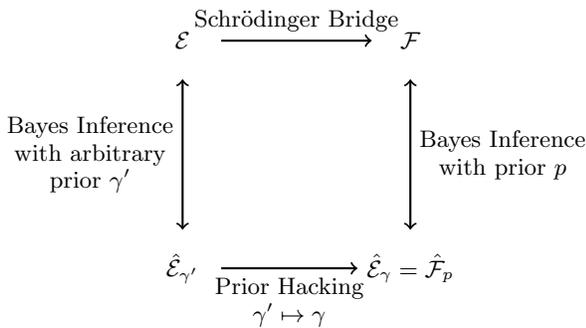

\section{Quantum Setting}\label{sec:quantum}
\subsection{Quantum Prior Hacking}\label{ssec:quantum-setting}
With the classical setting discussed, we move to the prior hacking in the quantum regime. Here, probability distributions $p,q$ are upgraded to quantum states $\rho,\omega$. These are operators in the Hilbert space $\mathcal{H}^d$ represented by density matrices in $\mathds{C}^{d\times d}$, with the constraints that $\Tr\rho=1$ and $\rho \succeq 0$. Moreover, the forward channel $\chn$ is upgraded to a quantum channel, which is a map $\chn[\cdot]:\mathcal{H}^d \longrightarrow \mathcal{H}^{d'}$ that is completely positive and trace preserving (CPTP). Our choice of the quantum analogue to Bayesian inversion Eq.~\eqref{eq:classical-retrodiction} is the Petz recovery map \cite{petz,petz1,barnum-knill,petzisking2022axioms, liu2026-extendedpetz,aw2026thesis}, given by:
\begin{equation}\label{eq:petz-map}
    \retd[\omega]=\sqrt{\rf}\chn^\dag\left[\frac{1}{\sqrt{\chn[\rf]}}\omega\frac{1}{\sqrt{\chn[\rf]}}\right]\sqrt{\rf}.
\end{equation}
The task of prior hacking is still operationally the same: given an evidence state $\omega$ and a conclusion $\rho$, is it possible to find a reference prior state $\rf$ (now a density operator) such that $\retd[\omega]=\rho$? To this, we state the quantum analogue of Theorem \ref{thm:classical-prior-hacking-channel-forall}:

\begin{theorem}\label{thm:quantum-surjective}
    For any pair of states $\rho,\omega \in \mathcal{H}^d$ and any channel $\chn: \mathcal{H}^d \longrightarrow \mathcal{H}^{d}$ that always maps to full rank states, i.e., $\forall\rho' \; \text{rank}(\chn[\rho'])=d$, prior hacking is always possible, i.e., $\exists \rf \; \text{s.t.} \; \retd[\omega]=\rho$.
\end{theorem}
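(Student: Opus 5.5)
The plan is a Brouwer fixed-point argument on the state space. It uses the matrix analogue of the rearrangement that turned Eq.~\eqref{eq:classical-prior-hacking-problem} into Eq.~\eqref{eq:classical-prior-hacking-equation}, so that the prior-hacking condition becomes a fixed-point equation. Introduce the quantum analogue of $\rfd$ in Eq.~\eqref{eq:rfd-def}:
\begin{equation*}
X(\rf) := \chn^\dag\!\left[\chn[\rf]^{-1/2}\,\omega\,\chn[\rf]^{-1/2}\right].
\end{equation*}
The hacking condition $\retd[\omega]=\rho$ then reads $\sqrt{\rf}\,X(\rf)\,\sqrt{\rf}=\rho$.

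The hypothesis is used twice. First, since $\chn[\rf]$ has rank $d$ for every state $\rf$, the inverse square root $\chn[\rf]^{-1/2}$ is defined and depends continuously on $\rf$ over the whole (compact, convex) set of density matrices. Second, $X(\rf)$ is positive definite. Indeed, for any state $\sigma$,
\begin{equation*}
\Tr[\sigma X(\rf)] = \Tr\!\left[\chn[\sigma]\,\chn[\rf]^{-1/2}\omega\,\chn[\rf]^{-1/2}\right] > 0,
\end{equation*}
because $\chn[\sigma]$ is full rank and $\chn[\rf]^{-1/2}\omega\,\chn[\rf]^{-1/2}$ is a nonzero positive operator. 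Hence $X(\rf)^{-1}$ also exists and is continuous in $\rf$.

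Next, solve the "outer" equation for fixed $X\succ 0$. The Riccati-type equation $Y X Y=\rho$ with $Y\succeq 0$ has the explicit solution given by the matrix geometric mean:
\begin{equation*}
Y = X^{-1}\#\rho = X^{-1/2}\left(X^{1/2}\rho X^{1/2}\right)^{1/2}X^{-1/2}.
\end{equation*}
This is valid even when $\rho$ is rank-deficient, and it is continuous in $X$ on the positive-definite cone. Then define
\begin{equation*}
\Phi(\rf) := \frac{Y(\rf)^2}{\Tr\, Y(\rf)^2}, \qquad Y(\rf) := X(\rf)^{-1}\#\rho .
\end{equation*}
The denominator never vanishes, since $YXY=\rho\neq 0$ forces $Y\neq 0$. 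So $\Phi$ is a continuous self-map of the state space.

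The key observation is that the hacking condition is invariant under rescaling the prior. Replacing $\rf\to c\rf$ gives $\chn[c\rf]=c\,\chn[\rf]$, hence $X\to X/c$, and so $\sqrt{c\rf}\,X(c\rf)\sqrt{c\rf}=\sqrt{\rf}\,X(\rf)\sqrt{\rf}$. Now let $\rf^\star=\Phi(\rf^\star)$, a fixed point supplied by Brouwer's theorem. Write $\rf^\star=Y^2/t$ with $Y=Y(\rf^\star)$ and $t=\Tr Y^2$. Then $\sqrt{\rf^\star}=Y/\sqrt t$, and by the scaling $X(\rf^\star)=X(Y^2/t)=t\,X(Y^2)$. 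At the same time, $Y$ was built from $X(\rf^\star)$, so $Y X(\rf^\star) Y=\rho$. Therefore
\begin{equation*}
\sqrt{\rf^\star}\,X(\rf^\star)\,\sqrt{\rf^\star}=\tfrac{1}{t}\,Y X(\rf^\star) Y=\rho/t\cdot t=\rho ,
\end{equation*}
i.e.\ $\ret{\chn}{\rf^\star}[\omega]=\rho$.

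I expect the main obstacle to be this bookkeeping of the normalisation together with continuity at the boundary of the state space. For rank-deficient $\rf$ or $\rho$, one must check that the Petz formula Eq.~\eqref{eq:petz-map} is still what is being satisfied, and that the geometric mean stays continuous when its second argument is singular. Both are handled by the full-rank hypothesis, which keeps $X(\rf)$ uniformly positive definite on the compact state space. If continuity of $\#$ in the singular argument proved awkward, I would first solve the problem for $\rho_\epsilon=(1-\epsilon)\rho+\epsilon\,\mathds{1}/d$. I would then pass to a convergent subsequence of the solutions $\rf_\epsilon$ as $\epsilon\to0$, using compactness and continuity of $\rf\mapsto\retd[\omega]$.
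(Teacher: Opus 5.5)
Your route differs from the paper's and is sound in its main steps, but the final line does not follow as written. Since $Y=Y(\rf^\star)$ is built from $X(\rf^\star)$, you have $YX(\rf^\star)Y=\rho$ exactly, and hence $\sqrt{\rf^\star}\,X(\rf^\star)\sqrt{\rf^\star}=\rho/t$. The scaling identity $X(Y^2/t)=t\,X(Y^2)$ only rewrites this as $YX(Y^2)Y$, which also equals $\rho/t$. The extra factor ``$\cdot\, t$'' comes from conflating $X(Y^2)$ with $X(\rf^\star)$. Scale invariance of the hacking condition cannot fix $t$, because $\rf^\star$ is already normalised.

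The missing ingredient is trace preservation of the Petz map. For every state $\rf$, $\Tr[\rf\,X(\rf)]=\Tr[\chn[\rf]\,\chn[\rf]^{-1/2}\omega\,\chn[\rf]^{-1/2}]=\Tr\omega=1$. Taking the trace of $\sqrt{\rf^\star}X(\rf^\star)\sqrt{\rf^\star}=\rho/t$ then gives $1=1/t$, so $t=1$ and $\ret{\chn}{\rf^\star}[\omega]=\rho$.

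With this line added, the rest holds:
\begin{itemize}
\item $X(\rf)\succ 0$ follows from the full-rank hypothesis, exactly as you show.
\item $X\mapsto X^{-1}\#\rho$ is continuous for fixed, possibly singular, $\rho$, because the PSD square root is continuous. Your $\epsilon$-regularisation is therefore unnecessary.
\item Brouwer applies to the compact convex state space.
\item For $X\succ 0$, the PSD solution of $YXY=\rho$ is unique, so the fixed points of $\Phi$ are exactly the hacking priors.
\end{itemize}

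The paper argues differently. It shows that $\mathcal{X}_{\chn,\omega}:\rf\mapsto\retd[\omega]$ is continuous and fixes every pure state, then derives surjectivity from the no-retraction theorem. This is clean for qubits, where the boundary of the Bloch ball is exactly the set of pure states, and it gives the geometric picture behind the figures.

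For $d\ge 3$, however, the boundary of the state space is the set of all rank-deficient states. On that set $\mathcal{X}_{\chn,\omega}$ is not the identity; it only maps each face (the states supported on a fixed subspace) into itself. The paper's extension to higher dimensions therefore needs an extra step, for instance that the straight-line homotopy from $\mathcal{X}_{\chn,\omega}$ to the identity keeps boundary points on the boundary.

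Your reformulation needs no information about boundary behaviour and is dimension-independent. The cost is the lemma $X(\rf)\succ 0$ and the geometric-mean solution of $YXY=\rho$. It also ties existence to the paper's algorithm. Since $X^{-1}\#\rho=\rho\#X^{-1}$ when $\rho$ is invertible, $\Phi$ is the normalised update of Algorithm~\ref{algo:quantum}, written in a form that stays valid for singular $\rho$. Your argument thus shows that this iteration has a fixed point and that its fixed points are precisely the hacking priors, although it says nothing about convergence.
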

\begin{proof}
    To avoid encumbering the main text, the proof can be found in Appendix~\ref{app:thm-quantum-surjective}.
\end{proof}

The condition that $\chn[\cdot]$ is always full rank is akin to $\chn$ having strictly positive entries in the classical setting. In both cases, these conditions are needed to make sure that the reverse map $\retd$ is always well defined regardless of the choice of the reference prior $\rf$. We will discuss some example of quantum channels that do not satisfy this in Sec.~\ref{ssec:quantum-examples}.

\subsubsection*{Obtaining the Hacked Prior}
Though the quantum solution to prior hacking is guaranteed to exist due to Theorem~\ref{thm:quantum-surjective}, finding the solution is not straightforward. One possible way to do it is to derive a fixed point iteration similar to Algorithm~\ref{algo:classical}. First, however, we have to define $\rfd$ for the quantum scenario:
\begin{equation} 
\label{eq:rfd-def-qm}
    \rfd=\frac{1}{\sqrt{\chn[\rf]}}\omega\frac{1}{\sqrt{\chn[\rf]}}.
\end{equation}

\begin{algorithm}\label{algo:quantum}
    The solution to quantum prior hacking for a channel $\chn$, evidence $\omega$, and target $\rho$ can be obtained by randomly picking a full-rank $\rf_0$ from the space of states and iterating the following equation:
\begin{equation}
    \rf_{i+1}=\left[\sqrt{\rho}\left(\frac{1}{\sqrt{\rho}}(\chn^\dag\left[\rfd_i\right])^{-1}\frac{1}{\sqrt{\rho}}\right)^\frac{1}{2}\sqrt{\rho}\right]^2,
\end{equation}
where 
\begin{equation}\label{eq:QPH-algo-eq}
    \rfd_i=\frac{1}{\sqrt{\chn[\rf_i]}}\omega\frac{1}{\sqrt{\chn[\rf_i]}}.
\end{equation}
As $i\rightarrow\infty$, $\rf_i$ will converge to the solution.
\end{algorithm}


The derivation of the iteration can be found in Appendix~\ref{app:quantum-ipf-sinkhorn}. As with the classical version (Algorithm~\ref{algo:classical}
), this algorithm still alternates between propagation and counter propagation steps by the channel ($\chn[\cdot]$ and $\chn^\dag[\cdot]$) and scaling steps with the output and input states (represented by the operations $\mathcal{D}_\omega(\cdot)$ and $\mathcal{D}'_\rho(\cdot)$).

Step-by-step, it can be written as:
\begin{equation*}
    \begin{tikzpicture}[
        node distance=0.8cm, 
        thick, 
        main node/.style={minimum size=1.0cm}
    ]
        \node[main node] (A) {$\rf_i$};
        \node[main node] (B) [right=of A] {$\chn[\rf_i]$};
        \node[main node] (C) [right=of B] {$\rfd_i$};
        \node[main node] (D) [right=of C] {$\chn^\tp\rfd_i$};
        \node[main node] (E) [right=of D] {$\rf_{i+1}$};

        \draw[->] (A) -- (B) node[midway, above] {$\chn[\cdot]$};
        \draw[->] (B) -- (C) node[midway, above] {$\mathcal{D}_\omega(\cdot)$};
        \draw[->] (C) -- (D) node[midway, above] {$\chn^\dag[\cdot]$};
        \draw[->] (D) -- (E) node[midway, above] {$\mathcal{D}'_\rho(\cdot)$};

    \end{tikzpicture}
\end{equation*}
where
\begin{align}
\mathcal{D}_\omega(\cdot)
&= \frac{1}{\sqrt{\cdot}} \, \omega \, \frac{1}{\sqrt{\cdot}} \label{eq:quantum-algo-step-1}\\
\mathcal{D}'_\rho(\cdot)
&= \left[
\sqrt{\rho}
\left(
\frac{1}{\sqrt{\rho}\,\cdot\,\sqrt{\rho}}
\right)^{1/2}
\sqrt{\rho}
\right]^2 . \label{eq:quantum-algo-step-2}
\end{align}
Notably, unlike in the classical case, $\mathcal{D}_\omega$ and $\mathcal{D}'_\rho$ are not symmetric with each other.
\begin{center}
\begin{figure}[ht!]
    \centering
    \includegraphics[width=0.79\linewidth]{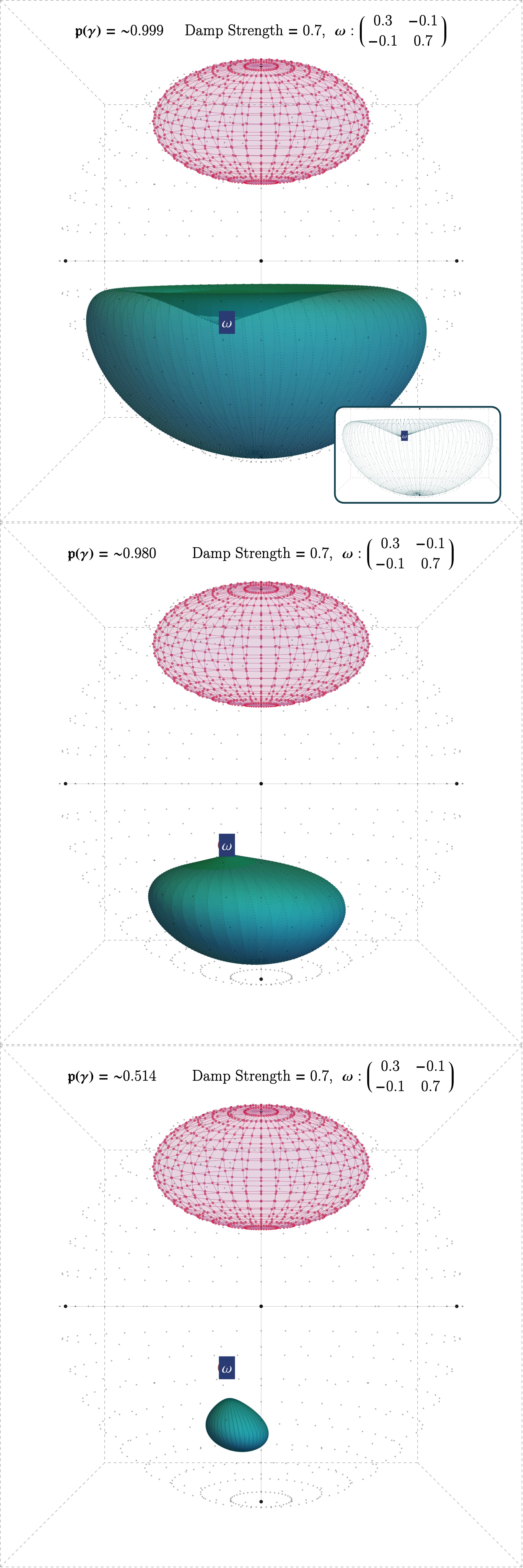}
    \caption{Blue plots correspond to (generally non-convex) images of prior hacking maps $\mathcal{X}_{\chn,\omega}(\rf)$ as per Eq.~\eqref{eq:q-prior-hack-channel} for an amplitude damping channel, of varying purity $\mathfrak{p}(\rf)$ for input $\rf$,  keeping the same damping strength and arbitrary evidence $\omega$. Red plots are the image of the amplitude damping channel itself on the Bloch sphere.}
    \label{fig:amp-damp}
\end{figure}
\end{center}
\begin{center}
\begin{figure*}[ht]
    \centering
    \includegraphics[width=0.98\textwidth]{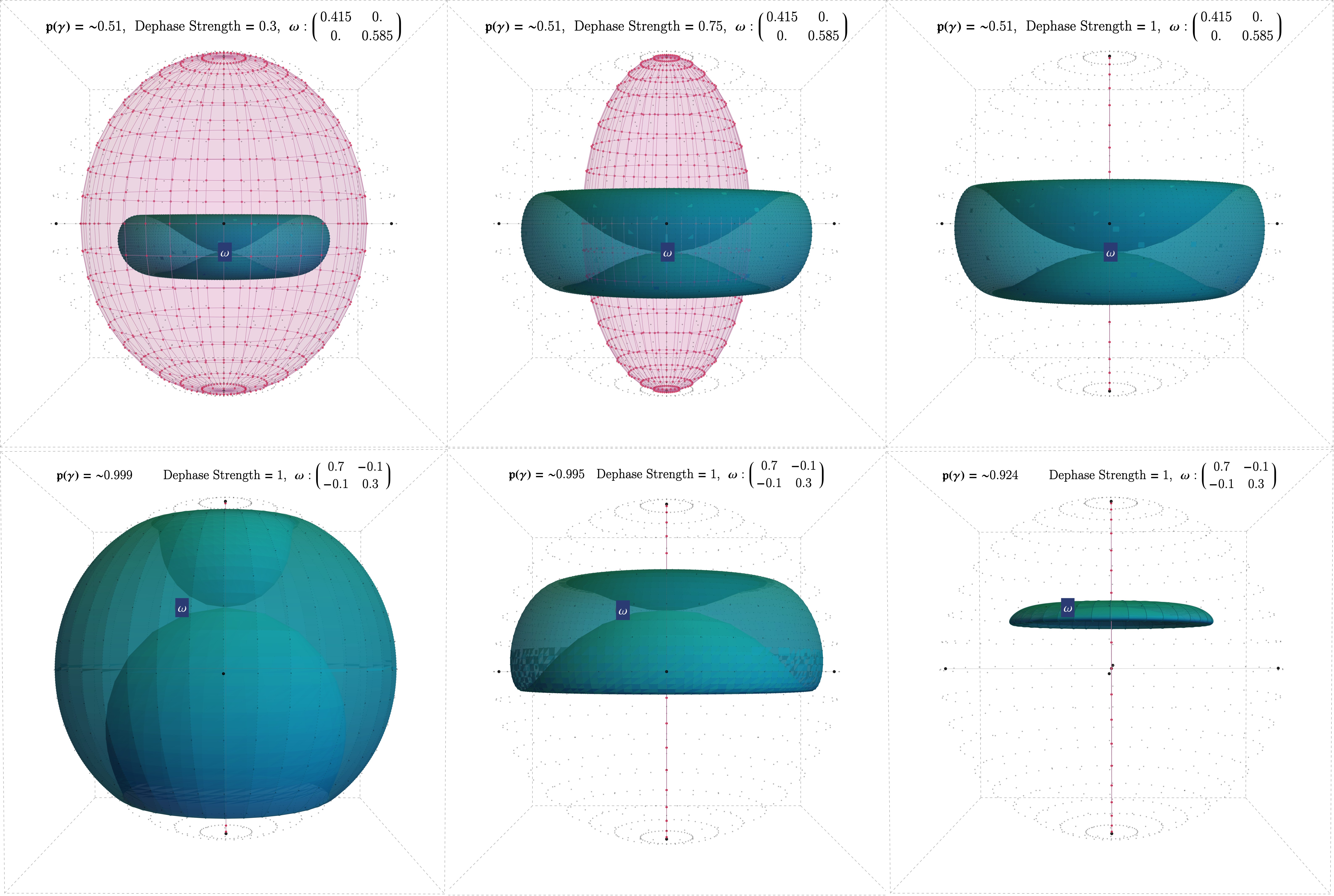}
    \caption{Blue plots correspond to (generally non-convex) images of the prior hacking maps $\mathcal{X}_{\chn,\omega}(\rf)$ as per Eq.~\eqref{eq:q-prior-hack-channel} for dephasing channel. The red plots correspond to the image of the dephasing channels themselves. The top row features varying dephasing strengths. The bottom row varies the purity $\mathfrak{p}(\rf)$ of input $\rf$, for a totally dephasing channel and a fixed generic evidence $\omega$. Notice how the hacking image bends in on itself toward the plane of dephasing containing the evidence, in adherence to Theorem \ref{thm:dephasing-channel}}
    \label{fig:dphs-damp}
\end{figure*}
\end{center}
\subsection{Notable Qubit Channel Cases}\label{ssec:quantum-examples}
As with the classical setting, we briefly run through some illustrative examples of quantum prior hacking. Since this work's focus is not on specific channels, we will only briefly state some notable observations, leaving more specific theorems for future exploration.
\begin{itemize}
    \item \textit{Unitaries} are analogous to permutation matrices in the classical regime, being reversible. Expectedly, non-trivial prior hacking is always impossible in this case since $\forall\rf: \ret{\mathcal{U}}{\rf} = \mathcal{U}^\dagger$ \cite{AwBS, liu2025quantifying}.
    \item As with classical erasures, \textit{quantum erasures} send all input density operators to the same density operator. Applying the Petz recovery map on such discard-and-prepare channels also yield quantum erasures, but to the reference state \cite{AwBS, liu2025quantifying}. Thus, quantum prior hacking here is the same as classical prior hacking---since all information is erased, hacking is always trivial with $\rf = \rho$. 

\item \textit{Amplitude damping} channels for qubit systems are akin to absorbing maps for classical stochastic matrices. Since some choice of vacuum state $\ket{0}\!\bra{0}$ is sent to itself, these channels do not satisfy the conditions of Theorem \ref{thm:quantum-surjective}. This excludes prior hacking for classical evidence toward conclusions of greater weights in $\ket{0}\!\bra{0}$. Consider Figure \ref{fig:amp-damp} for additional geometric intuition. 
    
\item The depolarising channel,
\begin{equation*} 
    \depolarise[\rho]=\lambda\frac{\mathds{I}}{2}+(1-\lambda)\rho,
\end{equation*}
interpolates between an unitary (identity) channel ($\lambda=0$), where non-trivial prior hacking is impossible, and an erasure channel ($\lambda=1$), where it is trivial. For intermediate $\lambda$, the Petz map depends continuously on both the channel and the reference state. As the reference state varies, its image deforms continuously from a single point to the full Bloch ball. By continuity, no region can remain inaccessible, implying surjectivity and hence the possibility of prior hacking (Theorem~\ref{thm:quantum-surjective}). For details, see Appendix \ref{app:qm-chan}.

\item The dephasing channel $\dephase$ is defined by the Kraus operators
\begin{equation}
    Z_0 = \begin{pmatrix} 1 & 0 \\ 0 & \sqrt{1-\lambda} \end{pmatrix}, \quad
    Z_1 = \begin{pmatrix} 0 & 0 \\ 0 & \sqrt{\lambda} \end{pmatrix}.
\end{equation}
It suppresses off-diagonal terms and, unlike the depolarising channel, fails the full-rank condition of Theorem~\ref{thm:quantum-surjective}. Specifically, computational basis states are fixed points and remain rank-deficient, rendering the Petz map ill-defined for such references and preventing prior hacking to pure basis states. Nevertheless, prior hacking can still occur in restricted cases. This is most apparent in the completely dephasing limit ($\lambda=1$), where the channel projects onto diagonal states. Notably, for any evidence $\omega$ and toward a decoherent conclusion $\rho$, prior hacking is possible if and only if $\rho,\omega$ shares the same weights in the computational basis. For details, consider Figure \ref{fig:dphs-damp} and Theorem \ref{thm:dephasing-channel} in Appendix \ref{app:qm-chan}.
\end{itemize}

\subsection{Quantum Schrödinger Bridges \& Prior Hacking}\label{ssec:quantum-SB}

\subsubsection*{Generic Candidates for Quantum Schrödinger Bridges}

Given Theorem \ref{thm:CSB-is-CPH}, one may be curious if something similar obtains for quantum prior hacking and some quantum edition of Schrödinger bridges. It turns out this is the case, but with important caveats. We first note that there are a number of works on quantum Schrödinger bridges (QSBs) \cite{pavon2002quantum-SB-QM,pavon2010discrete,ticozzi2010time,movilla2025quantum-SB-QM,friedland2017schrodinger}, but here we focus on the discrete single-step formulation of \cite{georgiou2015positive}. In analogue to the classical situation delineated in Section \ref{ssec:classical-SB}, we are given a CPTP map $\chn$ as an initializing reference quantum dynamics to be updated based on observations $\rho$ and $\omega$ corresponding to initial and terminal states respectively. The goal is to find a new evolution $\chnds$ that is close to the initial one but with a key feature that
\begin{equation}
    \chnds[\rho]=\omega.
\end{equation}
Moreover, since it is trace-preserving it must also obey $\chnds^\dag[\mathds{1}]=\mathds{1}$. Now, QSBs as is defined in \cite{georgiou2015positive}, is \textit{not} derived from an optimisation of some divergence such as in the classical case (see Eq.~\eqref{eq:optimisation-problem-sb}). This is due to the ambiguity on what entropic quantities should be optimised (consider Appendix \ref{app:optimisation-sb-quantum}). 

Instead, the QSB is constructed (with reference to the classical form Eq.~\eqref{eq:classical-SB-transit}) as the quantum analogue of the multiplicative functional transformation of the initial quantum channel. In other words, invertible matrices $\poi$ and $\pof$ \footnote{In the more general QSB problem, where $\poi$ and $\pof$ do not need to be Hermitian, the problem is equivalent to an operator scaling problem with specified marginals \cite{gurvits2004classical,garg2020operator,franks2018operator}. As a result, $\chnds$ could be solved by a different algorithm with proven convergence (see \cite{franks2018operator} for more information).}, along with Hermitian matrices $\rfsb$ and $\rfdsb$ must be obtained such that we obtain the bridge
\begin{equation}\label{eq:qm-SB-channel-standard}
    \chnds [\cdot] = \pof\chn[\poi^{-1}\cdot\poi^{-\dag}]\pof^\dag.
\end{equation}
Moreover, the constraints that must be satisfied are,
\begin{align}
    \rho &= \poi \rfsb \poi^\dag \label{eq:qsb-constraint-1} \\
    \omega &= \pof \chn[\rfsb] \pof^\dag \label{eq:qsb-constraint-2}\\
    \chn^\dag[\rfdsb] &= \poi^\dag \poi \label{eq:qsb-constraint-3}\\
    \rfdsb &= \pof^\dag \pof \label{eq:qsb-constraint-4}.
\end{align}
Whenever these conditions obtain, $\chnds$ gives a QSB. 

\subsubsection*{An Aside on Georgiou--Pavon Conjecture}\label{ssec:qsb-conjecture}

Before establishing the connection between Eq.~\eqref{eq:qm-SB-channel-standard} and quantum prior hacking, we take a slight detour pertaining to a particular choice of $\poi$ and $\pof$ potentials in the literature. In \cite{georgiou2015positive}, the authors conjectured that the bridge can always be solved for a channel that is always \textit{positivity improving}, that is $\chn(\rho)>0$ for all possible $\rho$. This is nothing but our condition in Theorem~\ref{thm:quantum-surjective}. With this, we state their conjecture
\begin{conjecture}\label{conj:gp15}
    \textbf{(Georgiou and Pavon \cite{georgiou2015positive})} Given a positivity improving CPTP map $\chn$ and two density matrices $\rho$ and $\omega$, there exists invertible matrices $\poi$ and $\pof$ and Hermitian matrices $\rfsb$ and $\rfdsb$ that satisfy Eq.~\eqref{eq:qsb-constraint-1} to \eqref{eq:qsb-constraint-4}. In particular, we can also set $\poi = \sqrt{\chn^\dag[\rfdsb]}$ and $\pof = \sqrt{\rfdsb}$.
\end{conjecture}
Note that after taking both $\poi$ and $\pof$ to be Hermitian, the QSB will be in the form of
the QSB of Eq.~\eqref{eq:qm-SB-channel-standard} will be in the form of
\begin{equation}\label{eq:qm-SB-channel-standard-hermitian}
    \chndh[\cdot] = \sqrt{\rfdsb} \left(\chn\left[\frac{1}{\sqrt{\chn^\dag[\rfdsb]}}\cdot \frac{1}{\sqrt{\chn^\dag[\rfdsb]}}\right] \right)\sqrt{\rfdsb},
\end{equation}
which mirrors the Petz map closely. Meanwhile, the relation between $\rfsb$ and $\rfdsb$ can now be written as \footnote{Note that in the original paper, this equation was written as $\rfsb=\sqrt{\chn^\dag[\rfdsb]} \omega \sqrt{\chn^\dag[\rfdsb]}$. We take this as a misprint.}
\begin{equation}\label{eq:rfd-def-qsb}
    \rfsb=\frac{1}{\sqrt{\chn^\dag[\rfdsb]}}\rho\frac{1}{\sqrt{\chn^\dag[\rfdsb]}}.
\end{equation}
This equation can be seen as the QSB analogue to Eq.~\eqref{eq:rfd-def-qm} for the Petz map. Moreover, the authors have also suggested an algorithm to solve the QSB when taking $\poi$ and $\pof$ to be Hermitian. Expectedly, this bears a striking resemblance to Algorithm~\ref{algo:quantum}.
\begin{algorithm}\label{algo:qsb}
    The solution to quantum Schrödinger bridge for a channel $\chn$, evidence $\omega$, and target $\rho$ can be obtained by randomly picking a full-rank $\rf_0$ from the space of states and iterating the following equation:
\begin{equation}
    \rfdsb_{i+1}=\left[\sqrt{\omega}\left(\frac{1}{\sqrt{\omega}}(\chn\left[\rfsb_i\right])^{-1}\frac{1}{\sqrt{\omega}}\right)^\frac{1}{2}\sqrt{\omega}\right]^2,
\end{equation}
where 
\begin{equation}\label{eq:Fherm-algo-eq}
    \rfsb_i=\frac{1}{\sqrt{\chn^\dag[\rfdsb_i]}}\rho\frac{1}{\sqrt{\chn^\dag[\rfdsb_i]}}.
\end{equation}
As $i\rightarrow\infty$, $\rf_i$ will converge to the solution.
\end{algorithm}
The authors postulated the conjecture due to strong numerical evidence, stating that the algorithm always converges to a solution if it exists. When $\alpha$ and $\beta$ are free to be non-Hermitian, the conjecture has already been proven by Friedland \cite{friedland2017schrodinger}. Here, we note that the proof of Theorem~\ref{thm:quantum-surjective} may be used to prove the theorem for Hermitian $\alpha$ and $\beta$ too:
\begin{theorem}\label{thm:gp-conjecture}
    Given a positivity improving CPTP map $\chn$ and two density matrices $\rho$ and $\omega$, there exists invertible matrices $\poi$ and $\pof$ and Hermitian matrices $\rfsb$ and $\rfdsb$ that satisfy Eq.~\eqref{eq:qsb-constraint-1} to \eqref{eq:qsb-constraint-4}. In particular, we can also set $\poi = \sqrt{\chn^\dag[\rfdsb]}$ and $\pof = \sqrt{\rfdsb}$.
\end{theorem}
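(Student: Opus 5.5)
The plan is to recognise the Hermitian QSB equations as a quantum prior hacking problem, but with the roles of channel and adjoint swapped. Then I would invoke the machinery behind Theorem~\ref{thm:quantum-surjective} rather than redo any fixed-point analysis.

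\textbf{Step 1: rewrite the constraints.} With $\poi=\sqrt{\chn^\dag[\rfdsb]}$ and $\pof=\sqrt{\rfdsb}$, the constraints \eqref{eq:qsb-constraint-3} and \eqref{eq:qsb-constraint-4} hold automatically. Constraint \eqref{eq:qsb-constraint-1} becomes Eq.~\eqref{eq:rfd-def-qsb}, which simply defines $\rfsb$. Substituting this into \eqref{eq:qsb-constraint-2} leaves a single equation for the positive operator $\rfdsb$:
\begin{equation*}
\omega=\sqrt{\rfdsb}\,\chn\!\left[\frac{1}{\sqrt{\chn^\dag[\rfdsb]}}\,\rho\,\frac{1}{\sqrt{\chn^\dag[\rfdsb]}}\right]\sqrt{\rfdsb}.
\end{equation*}
Now set $\Phi:=\chn^\dag$, so that $\Phi^\dag=\chn$. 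The right-hand side is then exactly the Petz map \eqref{eq:petz-map} of $\Phi$ with reference $\rfdsb$, applied to the "evidence" $\rho$. So the Hermitian QSB is equivalent to prior hacking for the tuple $(\Phi,\omega,\rho)$: find a reference $\rfdsb$ with $\ret{\Phi}{\rfdsb}[\rho]=\omega$. This is the quantum counterpart of the duality in Theorem~\ref{thm:CSB-is-CPH}.

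\textbf{Step 2: check the hypotheses of Theorem~\ref{thm:quantum-surjective} for $\Phi$.} Three points need checking.
\begin{enumerate}
\item \emph{Positivity improving.} For a CP map $\chn$, the condition $\chn[\rho']>0$ for all $\rho'$ is equivalent to $\langle v|\chn[|u\rangle\langle u|]|v\rangle>0$ for all nonzero $u,v$. By duality this equals $\langle u|\chn^\dag[|v\rangle\langle v|]|u\rangle$, so $\Phi$ is positivity improving exactly when $\chn$ is.
\item \emph{Trace preservation.} $\Phi$ is unital rather than trace preserving. However, the Petz expression is invariant under $\Phi\mapsto c\Phi$ and under $\rfdsb\mapsto c\rfdsb$, so normalisation is harmless. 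A direct computation also gives $\Tr\,\ret{\Phi}{\rfdsb}[\rho]=\Tr\rho=1$, so the map still sends states to states.
\item \emph{Where the proof uses trace preservation.} I would audit the appendix proof of Theorem~\ref{thm:quantum-surjective}. My expectation is that it uses only three ingredients: well-definedness and continuity of $\rf\mapsto\retd[\omega]$ on the (closed) set of states, the trace property just noted, and the boundary and degree behaviour forced by positivity improvement. If so, it carries over verbatim to $\Phi$ and yields a state $\rfdsb$ solving the displayed equation.
\end{enumerate}

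\textbf{Step 3: invertibility.} Positivity improvement of $\Phi$ gives $\chn^\dag[\rfdsb]>0$, hence $\poi$ is invertible, and $\rfsb$ defined through $\rho$ is Hermitian (indeed positive). Invertibility of $\pof=\sqrt{\rfdsb}$ requires $\rfdsb>0$.
\begin{itemize}
\item When $\omega$ is full rank this follows from $\omega=\sqrt{\rfdsb}\,\chn[\rfsb]\sqrt{\rfdsb}$, because $\operatorname{rank}\omega\le\operatorname{rank}\rfdsb$.
\item In the rank-deficient case I would either restrict to full-rank marginals, as \cite{georgiou2015positive} implicitly does, or take a limit of full-rank $\omega_\epsilon$ and control the solutions.
\end{itemize}

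\textbf{Main obstacle.} I expect the hardest part to be Step~2(3): confirming that the existence argument for Theorem~\ref{thm:quantum-surjective} never genuinely uses trace preservation of the channel, only its positivity improvement. If it does use trace preservation, I would first try a rescaling or similarity trick to make $\Phi$ trace preserving without changing the Petz fixed-point equation. Failing that, I would re-prove the surjectivity statement directly for positivity-improving CP maps, following the same topological outline.
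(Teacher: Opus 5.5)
Your reduction is the paper's own. The appendix defines $\mathcal{Y}_{\chn,\rho}(\rfdsb)$, which is exactly the right-hand side of your displayed equation, and reruns the no-retraction argument of Theorem~\ref{thm:quantum-surjective} using only continuity and $\mathcal{Y}_{\chn,\rho}(\rfdsb)=\rfdsb$ for pure $\rfdsb$. Three things you add make explicit what the paper leaves implicit: reading the map as the Petz map of $\chn^\dagger$, checking that $\chn^\dagger$ is positivity improving, and computing $\Tr\,\mathcal{Y}_{\chn,\rho}(\rfdsb)=\Tr\rho$. That trace identity already removes your ``main obstacle''. It is the only input needed to get $\langle v|\chn[\cdots]|v\rangle=1$ for pure $\rfdsb=|v\rangle\langle v|$, so trace preservation of the channel is never used.

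Two points need fixing.

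\emph{Step 3, rank-deficient $\omega$.} The limiting option cannot work, because the statement is false when $\omega$ is rank-deficient. If $\poi$ and $\pof$ are invertible, then $\rfsb=\poi^{-1}\rho\,\poi^{-\dagger}$ is a nonzero positive operator. Positivity improvement gives $\chn[\rfsb]>0$, and hence $\omega=\pof\,\chn[\rfsb]\,\pof^{\dagger}>0$. This holds whether or not $\poi,\pof$ are Hermitian. So restricting to full-rank $\omega$ is necessary, not a convenience; the paper's proof never mentions this. Once you restrict, your rank argument correctly gives $\rfdsb>0$.

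\emph{``Verbatim'' transfer for $d\ge 3$.} For $d\ge 3$ the boundary of the state space is the set of all rank-deficient states, not just the pure ones. The appendix's claim that it is ``formed by the pure states'' is true only for qubits. Since $\mathcal{Y}_{\chn,\rho}$ is the identity only on pure states, the argument cannot be copied as written. It is rescued as follows:
\begin{enumerate}
\item $\mathcal{Y}_{\chn,\rho}(\rfdsb)=\sqrt{\rfdsb}\,(\cdots)\sqrt{\rfdsb}$ is supported in $\mathrm{supp}\,\rfdsb$.
\item Hence for boundary $\rfdsb$ the segment $(1-t)\rfdsb+t\,\mathcal{Y}_{\chn,\rho}(\rfdsb)$ stays inside the boundary.
\item So the restriction of $\mathcal{Y}_{\chn,\rho}$ to the boundary is homotopic to the identity there. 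This is the degree-one property the retraction argument actually needs.
\end{enumerate}
You allude to this with ``boundary and degree behaviour'', but you should prove it explicitly rather than import the appendix's reasoning.
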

\begin{proof}
        To avoid encumbering the main text, the proof can be found in Appendix~\ref{app:thm-gp-conjecture}.
\end{proof}
Note that this does not prove the convergence of Algorithm~\ref{algo:qsb}, but simply the existence of a solution. Additionally, we leave the question of uniqueness of the solution open, though some results have been shown in \cite{friedland2017schrodinger}.

With this all detailed, one may expect that $\chndh$ is such that a quantum analogue of Theorem \ref{thm:CSB-is-CPH} can be obtained. This, however, is generally \textit{not} the case. Specifically ${(\hat\chndh)_\rho} \neq \retd$, where $\retd$ is prior-hacked for $(\chn,\rho,\omega)$. This boils down to the non-correspondence between Eqs.~\eqref{eq:QPH-algo-eq} and \eqref{eq:Fherm-algo-eq}, that is setting $\Xi \mapsto \Gamma$ in Eq.~\eqref{eq:qm-SB-channel-standard-hermitian} does not yield a proper QSB. For details, see Appendix \ref{app:Fherm-not-Thm}.

\subsubsection*{Quantum Analogue to Theorem \ref{thm:CSB-is-CPH} \& \\ Inference-Consistent Quantum Schr\"odinger Bridges}
Now, while $\chndh$ does not satisfy a quantum analogue of Theorem \ref{thm:CSB-is-CPH}, this does not mean that no $\chnds$ fulfils this feature. There may be some choice of $\poi$ and $\pof$ for Eq.~\eqref{eq:qm-SB-channel-standard} that does so. This brings us to our key result for our exploration in the quantum regime. 

\begin{theorem}\label{thm:QSB-is-QPH}
    For quantum Schr\"odinger bridge or quantum prior hacking problems characterized by $(\chn,\rho,\omega)$, the quantum Schr\"odinger bridge $\chnd$ satisfies 
    \begin{equation} \label{eq:QSB-is-QPH}
        \ret{\chnd}{\rho} = \retd
    \end{equation}
    (where $\retd[\omega]=\rho$) if and only if $\chnd$ is given by $\chnds$ with 
    \begin{align}\label{eq:condn-ic}
        \poi &= \sqrt{\rho} \ir \rf, \quad \pof = \sqrt{\omega} \ir{\chn[\rf]}
    \end{align}
i.e. \begin{equation} \label{eq:ic-QSB}
    \chnd [\cdot]= \sqrt{\omega} \ir{\chn[\rf]} \chn\bigg[ \sqrt \rf \ir \rho\cdot \ir \rho \sqrt{\rf} \bigg] \ir{\chn[\rf]} \sqrt{\omega}
\end{equation}
\end{theorem}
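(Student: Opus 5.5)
The plan is to exploit the fact that, for a full-rank prior, the Petz map is an invertible transformation of the channel: the Petz recovery of the Petz recovery returns the original channel. Both directions of the equivalence then reduce to the explicit formula \eqref{eq:ic-QSB}. Throughout I assume $\rho,\omega$ and the hacking prior $\rf$ are full rank, so that all inverse square roots are defined. The existence of $\rf$ with $\retd[\omega]=\rho$ is taken from Theorem~\ref{thm:quantum-surjective} under its positivity-improving hypothesis, which also makes $\chn[\rf]$ full rank.

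\emph{Only if.} Let $\chnd$ be any QSB, so in particular $\chnd[\rho]=\omega$, and suppose $\ret{\chnd}{\rho}=\retd$. By \eqref{eq:petz-map} with prior $\rho$,
\begin{equation*}
\ret{\chnd}{\rho}[Y]=\sqrt{\rho}\,\chnd^\dag\!\left[\ir{\omega}\,Y\,\ir{\omega}\right]\sqrt{\rho}.
\end{equation*}
This relation can be solved for the adjoint:
\begin{equation*}
\chnd^\dag[X]=\ir{\rho}\,\ret{\chnd}{\rho}\!\left[\sqrt{\omega}\,X\,\sqrt{\omega}\right]\ir{\rho}.
\end{equation*}
Hence $\chnd$ is uniquely determined by its Petz map, namely as the Petz map of $\ret{\chnd}{\rho}$ with prior $\omega$. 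Substituting $\ret{\chnd}{\rho}=\retd$ and the adjoint
\begin{equation*}
\retd^\dag[Z]=\ir{\chn[\rf]}\,\chn\!\left[\sqrt{\rf}\,Z\,\sqrt{\rf}\right]\ir{\chn[\rf]}
\end{equation*}
then yields $\chnd[X]=\sqrt{\omega}\,\retd^\dag\!\left[\ir{\rho}X\ir{\rho}\right]\sqrt{\omega}$, which is exactly \eqref{eq:ic-QSB}. Reading off the outer and inner conjugations identifies this with $\chnds$ of \eqref{eq:qm-SB-channel-standard} for $\pof=\sqrt{\omega}\ir{\chn[\rf]}$ and $\poi^{-1}=\sqrt{\rf}\ir{\rho}$. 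The latter is $\poi=\sqrt{\rho}\ir{\rf}$, as in \eqref{eq:condn-ic}.

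\emph{If.} I must show two things: that \eqref{eq:ic-QSB} is a genuine QSB, and that its Petz map with prior $\rho$ is $\retd$.

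For the first, I set $\rfsb:=\poi^{-1}\rho\poi^{-\dag}=\sqrt{\rf}\ir{\rho}\rho\ir{\rho}\sqrt{\rf}=\rf$ and $\rfdsb:=\pof^\dag\pof=\ir{\chn[\rf]}\,\omega\,\ir{\chn[\rf]}=\rfd$ from \eqref{eq:rfd-def-qm}. Both are Hermitian, and \eqref{eq:qsb-constraint-1} and \eqref{eq:qsb-constraint-4} hold by construction. Constraint \eqref{eq:qsb-constraint-2} holds because $\pof\chn[\rf]\pof^\dag=\omega$. The one nontrivial check is \eqref{eq:qsb-constraint-3}, i.e.\ $\chn^\dag[\rfd]=\poi^\dag\poi=\ir{\rf}\,\rho\,\ir{\rf}$. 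This is precisely the prior hacking equation $\sqrt{\rf}\,\chn^\dag[\rfd]\,\sqrt{\rf}=\retd[\omega]=\rho$ after conjugating by $\ir{\rf}$. Trace preservation, $\chnd^\dag[\mathds{1}]=\mathds{1}$, follows from the same identity.

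For the second, I check $\chnd[\rho]=\omega$ directly, since the inner argument collapses to $\rf$. The Petz inversion computed in the ``only if'' direction, run backwards, then gives $\ret{\chnd}{\rho}=\retd$.

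The main obstacle I expect is making the uniqueness step rigorous, namely that the Petz map with a full-rank prior determines the forward channel. This requires $\chnd[\rho]=\omega$ to hold for every candidate bridge, so that the denominator in $\ret{\chnd}{\rho}$ is $\omega$ regardless of which $\poi,\pof$ are chosen. It also requires that $\rho$ and $\omega$ be invertible. I would state full rank of $\rho,\omega$ as a standing assumption, justified by the same well-definedness requirement used in Theorem~\ref{thm:quantum-surjective}. I would also note that the residual freedom $\poi\mapsto\poi U$, $\pof\mapsto V\pof$ in representing a given channel does not affect the channel identity \eqref{eq:ic-QSB}, which is what the theorem asserts.
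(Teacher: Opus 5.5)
Your proof is correct and follows the paper's own route. For ``only if'' you invert the Petz relation for $\chnd^\dag$ using $\chnd[\rho]=\omega$ and take adjoints. For ``if'' you verify the bridge property and run the same inversion backwards. Your explicit check of constraint \eqref{eq:qsb-constraint-3}, equivalently $\chnd^\dag[\mathds{1}]=\mathds{1}$, from the hacking equation $\sqrt{\rf}\,\chn^\dag[\rfd]\sqrt{\rf}=\rho$ supplies a step the paper leaves implicit, since the paper only checks $\chnd[\rho]=\omega$.

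One correction to your closing aside: transformations like $\poi\mapsto\poi U$, $\pof\mapsto V\pof$ do not in general preserve $\chnds$. For example, $\pof\mapsto V\pof$ sends $\rho$ to $V\omega V^\dag$. Such unitary freedom is precisely what generates the continuum of candidate bridges from which the theorem selects one. This is harmless to your argument, because you establish the channel identity \eqref{eq:ic-QSB} directly.
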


\begin{proof}
    For the \textit{if} direction. we must check two implications. Firstly, that Eq.~\eqref{eq:condn-ic} ensures that $\chnds$ is indeed a QSB (after this choice of $\poi,\pof$ we label it as $\chnd$). Secondly, it is the case that under such conditions $\ret{\chnd}{\rho} = \retd$. For the first implication, with Eq.~\eqref{eq:condn-ic} 
    \begin{equation}\label{eq:fgen-condned}
       \chnds[\rho] = \sqrt{\omega} \ir{\chn[\rf]} \chn\bigg[ \sqrt \rf \ir \rho \rho \ir \rho \sqrt{\rf} \bigg] \ir{\chn[\rf]} \sqrt{\omega}, 
    \end{equation}
which simply gives $\omega$, and so this is a valid QSB, sending $\rho$ to $\omega$. Now, for the second implication, we see easily that, from Eq.~\eqref{eq:qm-SB-channel-standard}:
\begin{align*}
    \chnds^\dagger [\cdot] &= \poi^{-\dagger} \chn^\dagger[\pof^\dagger \cdot \pof] \poi^{-1}  \\ \Rightarrow
    \chnd^\dagger[\cdot]&= \ir \rho \sqrt{ \rf } \,\chn^\dagger \bigg[ \ir{\chn[\rf]} \sqrt{\omega}\cdot \sqrt\omega \ir{\chn[\rf]} \bigg] \sqrt{ \rf } \ir \rho
\end{align*}
Now, noting Eq.~\eqref{eq:petz-map} and the fact that $\chnd$ is proven as a QSB, this just means,
\begin{align*}
    \ret{\chnd}{\rho}[\cdot] &= \sqrt \rho \ir \rho \sqrt{ \rf } \,\chn^\dagger \bigg[ \ir{\chn[\rf]} \sqrt{\omega} \ir \omega \cdot \\ & \quad  \quad \ir \omega \sqrt\omega \ir{\chn[\rf]} \bigg] \sqrt{ \rf } \ir \rho \sqrt \rho \\
    &= \retd[\cdot]
\end{align*}
So the \textit{if} direction is proven. As for the \textit{only if} direction, we work backwards. First we impose that $\ret{\chnd}{\rho} = \retd$:
\begin{align*}
    & \; \sqrt{\rho}\chnd^\dag\left[\frac{1}{\sqrt{\chnd[\rho]}}\cdot\frac{1}{\sqrt{\chnd[\rho]}}\right]\sqrt{\rho}  \\ = & \;  \sqrt{\rf}\chn^\dag\left[\frac{1}{\sqrt{\chn[\rf]}}\cdot\frac{1}{\sqrt{\chn[\rf]}}\right]\sqrt{\rf}.  
\end{align*}  
We then obtain the adjoint, which gives the channel itself, 
\begin{align}
    \chnd^\dagger[\cdot] &= 
    \frac{1}{\sqrt{\rho}} \sqrt{\rf} \chn^\dag \bigg[\ir{\chn [\rf]} \sqrt{\chnd[\rho]} \cdot \nonumber \\ & \quad \quad \sqrt{\chnd[\rho]}\ir{\chn [\rf]}\bigg] \sqrt{\rf} \frac{1}{\sqrt{\rho}} \nonumber \\
    \chnd[\cdot] &= \sqrt{\chnd[\rho]}\ir{\chn [\rf]}
    \chn
    \bigg[ \sqrt{\rf} \frac{1}{\sqrt{\rho}} 
    \cdot \nonumber \\ & \quad \quad  \frac{1}{\sqrt{\rho}} \sqrt{\rf}  \bigg] \ir{\chn [\rf]} \sqrt{\chnd[\rho]}  \label{eq:modified-chnd-raw}
\end{align}
Finally we apply QSB condition $\chnd[\rho]=\omega$, we obtain Eq.~\eqref{eq:ic-QSB}, and thus Eq.~\eqref{eq:condn-ic}.
\end{proof}

We may call $\chnd$ of Eq.~\eqref{eq:ic-QSB}, the \textit{inference-consistent} quantum Schr\"odinger bridge, due to its uniqueness in satisfying Eq.~\eqref{eq:QSB-is-QPH}.
We make a few notes here. 
\begin{itemize}
    \item When $\chn$ is injective, Eq.~\eqref{eq:modified-chnd-raw} and Eq.~\eqref{eq:QSB-is-QPH} \textit{implies} that $\chnd$ is a QSB for $\rho,\omega$ \footnote{A parallel feature occurs if we perform such a similar induction for a classical SB. The structure of the classical bridge $\chnd$ which goes to Eq.~\eqref{eq:classical-SB-transit} implies $\chnd p =q$ if $\chn$ is an injective stochastic map.}. Consider how $\ret{\chnd}{\rho}[\omega]=\retd[\omega]=\rho$ gives $\chnd^\dag [\chnd[\rho]^{-\frac{1}{2}}\omega\chnd[\rho]^{-\frac{1}{2}}]=\mathbb{I}$. Now, $\chn$ is injective if and only if $\retd$ is injective, which is the same as $\ret{\chnd}{\rho}$, so $\chnd$ is injective and thus $\chnd^\dag$ is injective. Hence, if $\chn$ is injective, $\chnd[\rho]^{-1/2} \, \omega \, \chnd[\rho]^{-1/2} = \mathbb{I}$ and so $\chnd[\rho]=\omega$ \footnote{If one applies the same arguments for inducing an $\chnd$, but now with minimum change principle map as found in \cite{bai2025quantum} with $\omega$ as the output reference, one simply gets the outcome that if $\chnd$ is to be a SB, then $[\chn[\rf],\omega]=0$. This simply means that the recovery maps of \cite{bai2025quantum} reduce to the Petz transpose.}. 

    \item The inference-consistent QSB's $\Xi$ and $\xi$ of \cite{georgiou2015positive}'s formulation is equivalent to $\Gamma$ and $\rf$ respectively, mirroring the classical matrix scaling scenario. 
    
    \item The inference-consistent QSB can be written in the following way (c.f. Eq.~\eqref{eq:ic-QSB}) :
    \begin{equation}
    \chnd = \mathcal{J}_{\sqrt{\omega}\ir {\chn[\rf]}}\circ \chn \circ \mathcal{J}_{\sqrt{\rho}\ir {\rf}}^{-1}
    \end{equation}
    where $\mathcal{J}_{\chi}[\cdot] = \chi \cdot \chi^\dagger$. This is structurally very much comparable to the \textit{classical} SB scenario as per Eq.~\eqref{eq:classical-SB-transit}, which may be written as
    \begin{equation}
            \chnd = D_{q \oslash \chn\rf} \, \chn \, D_{p \oslash \rf}^{-1}.
    \end{equation}

    \item As mentioned before, $\chnds$ is not formulated in terms of an optimisation problem. Given that the Petz map, in its modified form in \cite{bai2025quantum}, is a solution to an optimisation problem, one might wonder whether the inference-consistent QSB $\chnd$ could then be phrased as a solution to an optimisation problem. Unfortunately, a straightforward quantisation of the optimisation in the classical SB problem would give neither $\chnds$ nor $\chnd$ as a solution, leaving the problem open for now. For details, see Appendix~\ref{app:optimisation-sb-quantum}.
\end{itemize}
The uniqueness of the inference-consistent QSB $\chnd$ among generic candidates for QSBs $\chnds$ is notable given that Theorem \ref{thm:CSB-is-CPH} presides over the classical regime. 

\section{Concluding Discussion} \label{sec:concl}
We conclude by summarizing our findings alongside their conceptual upshots, before mentioning possible avenues for future work. 

\subsubsection*{Bayes' Rule, Schr\"odinger Bridges \& Belief Kinematics}
We have showed that pathological closed-mindedness can be spoofed as sound Bayesian updating. Theorem \ref{thm:classical-prior-hacking-channel-forall} demonstrates the broad susceptibility of Bayesian inference to such prior hacking. Remarkably, this vulnerability is categorically parallel to the problem of constructing Schr\"odinger bridges (Theorem \ref{thm:CSB-is-CPH}).

This correspondence has notable implications for belief kinematics and philosophical discussions of epistemic pathologies \cite{Battaly2020-dogma-philo, cassam2016vice-dogma-philo}. We emphasize that in statistical physics, the Schr\"odinger bridge is far from pathological: it represents a principled update of a reference process to the most reasonable neighbouring dynamics consistent with the empirical constraints $p,q$ (Eq.~\ref{eq:optimisation-problem-sb}). However, when interpreted through the lens of belief dynamics, the mathematical structure admits a pathological reading. If prior hacking corresponds to retroactively altering one’s prior $\rf$ to protect a desired belief, then Schr\"odinger bridging corresponds to revising one’s model of the evidence-generating process $\chn$ itself in order to preserve that belief.

In informal terms, while prior hacking resembles the claim: “I forgot to mention that I have hidden evidence that I was never sick,” Schr\"odinger bridging corresponds to: “I realize now that doctors cannot be trusted after all.” Theorem \ref{thm:CSB-is-CPH} shows that these two forms of epistemic inertia are mathematically equivalent. Altering one’s model of evidence propagation after the fact is formally indistinguishable from manipulating the prior so that Bayesian updating yields the desired conclusion. But once again, this pathological reading is only possible if $p$ had \textit{not} been a genuine observation but just some pre-selected belief we do not wish to update on.

Returning to the role of Schr\"odinger bridges in statistical physics and its applications, our link to prior hacking (Theorem \ref{thm:CSB-is-CPH}) demonstrates that these constructions are really doing something characteristically Bayesian in nature. In effect, they implement some dual of Bayes’ rule, not on prior distributions themselves, but on the underlying process, given genuine observations on the input $p$ and output $q$. Given Bayes' rule's epistemic significance, this perspective elucidates the same significance embedded in Schr\"odinger bridges.

\subsubsection*{The Petz Recovery Map \& Quantum Schr\"odinger Bridges}
Our quantum extension for the Petz recovery reveals a similar susceptibility to prior hacking as in the classical setting (Theorem \ref{thm:quantum-surjective}). Theorem \ref{thm:QSB-is-QPH} is particularly noteworthy as it shows that the Petz map obeys a quantum analogue to Theorem \ref{thm:CSB-is-CPH} defined for QSBs as constructed in the existing literature \cite{georgiou2015positive}. Furthermore, the result \textit{selects} a unique inference-consistent candidate for QSBs among the current continuum of generic candidates. 

\subsubsection*{Avenues for Future Work}
First, unlike in the classical setting, quantum Schr\"odinger bridges are not yet known to arise from a well-defined optimization problem. Appendix \ref{app:optimisation-sb-quantum} discusses possible reasons for this gap. Identifying such an optimization principle would be a natural extension of current work, particularly in light of the inference-consistent formulation of QSBs introduced here.

Second, our analysis has focused primarily on existence results. A natural follow-up is to understand how ``easy prior hacking is'' for given tuples $(\chn,p,q)$. Our simulations and figures suggest that highly dissipative process are easier to prior hack. Erasure channels are trivially hackable in both classical and quantum settings while channels with high absolute determinants require more fine-graining.  Studying robustness under constrained prior adjustments may provide a new operational yet belief-kinematic perspective on the irreversibility of a channel $\chn$ with respect to states $p,q$.

Finally, given the strong structural correspondence established here between Bayesian inference and Schr\"odinger bridges, it may be worthwhile to examine whether existing applications of Schr\"odinger bridge methods in statistical physics and machine learning can be helpfully conceptualized through the lens of prior hacking, and whether such a perspective offers any advantages for associated computational tasks.


\section*{Acknowledgements}
We thank Minjeong Song, Valerio Scarani, Jeremy Heng and Tryphon T. Georgiou for helpful discussions. This project is supported by the National Research Foundation, Singapore through the National Quantum Office, hosted in A*STAR, under its Centre for Quantum Technologies Funding Initiative (S24Q2d0009); and by the Ministry of Education, Singapore, under the Tier 2 grant ``Bayesian approach to irreversibility'' (Grant No.~MOE-T2EP50123-0002).


\bibliographystyle{unsrt}
\bibliography{apssamp}

\appendix
\onecolumngrid
\section{Hadamard product}\label{app:hadamard}

The Hadamard product is a binary operation between two $\mathds{R}^{m\times n}$ matrices that is defined by
\begin{equation}
    (A \odot B)_{i,j} = A_{i,j}B_{i,j}.
\end{equation}
I.e., it is an elementwise multiplication. It is commutative, assosciative, and distributive over addition. An important identity is
\begin{equation}\label{eq:hadamard-identity}
    a \odot b = D_ab=D_ba.
\end{equation}
Moreover, we can similarly define a division operation $\oslash$ such that if $A \odot B = C$, $A = C \oslash B$ and $B = C \oslash A$. With these, we can take Eq.~\eqref{eq:classical-prior-hacking-problem} and rewrite it as
\begin{align}
    D_{\rf}\chn^\tp D_{\chn\rf}^{-1}q &=p \nonumber\\
    \rf\odot\left[\chn^\tp (q \oslash (\chn\rf))\right]&=p \nonumber\\
    \rf&=p\oslash[\chn^\tp (q\oslash(\chn\rf))].
\end{align}
Here, we have arrived at Eq.~\eqref{eq:classical-prior-hacking-equation}. Note that the division might not be well defined, which will affect the existence of the solution. See the main text for more information.

\section{The Iterative Proportional Fitting (IPF)/Sinkhorn problem}\label{app:ipf-sinkhorn}
Let us introduce the Iterative Proportional Fitting (IPF) problem, also known as the Sinkhorn problem. Given an initial $d'\times d$ matrix $X$, we wish to find a new matrix 
\begin{equation}
    M=D_aXD_b,
\end{equation}
where $D_a,D_b$ are diagonal matrices built from vectors $a,b$ such that
\begin{align}
    Me&=u \nonumber \\
    M^\tp e&=v,
\end{align}
where $e=(1,1,\cdots,1)^\tp$. Here, $u,v$ are vectors denoting the row and column sum of $M$. If we insert the definition of $M$ into the first condition, we find
\begin{align}
    D_aXD_be&=u \nonumber\\
    a \odot (Xb) &= u.
\end{align}
Similarly, for the second condition,
\begin{equation}
    (X^\tp a) \odot b = v.
\end{equation}
By substituting $a$, we get
\begin{equation}\label{eq:IPF}
    b=v\oslash[X^\tp(u\oslash(Xb))].
\end{equation}
This is the same as Eq.~\eqref{eq:classical-prior-hacking-equation}, by a simple change of variables:
\begin{equation}
    (X,u,v,b)\longleftrightarrow(\chn,p,q,\rf).
\end{equation}
Note that if we define $\rfd=q\oslash\chn(\rf)$, we could also exchange $a$ and $\rfd$. Therefore, the prior hacking problem is mathematically equivalent to the Sinkhorn problem.

\section{Proofs}
\subsection{Proof for Theorem \ref{thm:classical-prior-hacking-channel-forall}}\label{app:thm-classical-hacking-forall}
We first consider the equivalence of the first two statements. (\textit{1.}) holds if and only if there are no singularities emergent from the entries in the $D^{-1}_{\chn \rf}$ term, which is equivalent to the statement that all outputs on $\chn$ have full support:
\begin{equation}\label{eq:fullsupp}
    \forall (s \in \Delta^{d-1},y \in \Omega'): [\chn s](y) >0.
\end{equation}
This is nothing but (\textit{2.}). Firstly, from (\textit{2.}) to Eq.~\eqref{eq:fullsupp} is trivial. Secondly, from Eq.~\eqref{eq:fullsupp} to (\textit{2.}), consider first the negation of (\textit{2.}): $\exists (y,x'):\chn_{y,x'}=\chn(y|x') = 0$. Then for a pure input $s(x)=\delta_{xx'}$, $\chn s (y)=\sum_{x}\chn(y|x) s(x)=\chn(y|x')=0 $, which contradicts (\textit{2.}). Thus, the (\textit{1.}) and (\textit{2.}) are equivalent.

(\textit{2.}) implies (\textit{3.}) because of Theorem \ref{thm:classical-prior-hacking}. The matrix $B_{y,x}=p(x)q(y)$, whose rows sum to any $q$ and columns sum to $p$, will always have the same pattern as $\chn$ because it is a positive matrix, thus satisfying the second condition of Theorem~\ref{thm:classical-prior-hacking}. As for how (\textit{3.}) implies (\textit{2.}), consider the contrapositive which is that if
\begin{equation}
    \exists j,i:\chn_{j,i} = 0,
\end{equation}
then there exists $p,q$ such that prior-hacking is impossible. This can be simply proven by picking $p(x)=\delta_{x,i}$ and $q(x)=\delta_{y,j}$. The only nonnegative matrix that has a row sum of $q$ and column sum of $p$ is the matrix $B_{y,x}=\delta_{x,i}\delta_{y,j}$. However, this matrix violates the third condition of Theorem~\ref{thm:classical-prior-hacking}, which means that prior-hacking is impossible.


\subsection{Corollaries on Prior Hacking Concatenations} \label{app:concat-coros}
Here, we state some corollaries that involve concatenations of channels and prior hacking. The first is a sensible implication of Theorem \ref{thm:classical-prior-hacking-channel-forall}.

\begin{coro}\label{coro:concat-coro1}
    If two transition matrices $\chn_1: \Omega 
    \longrightarrow \Omega''$ and $\chn_2:\Omega'' 
    \longrightarrow \Omega'$ are individually prior-hackable for tuples of conclusions $p$ and evidences $q$ (in respective state spaces) then their concatenation $\chn_2\chn_1:\Omega 
    \longrightarrow \Omega'
    $ will also be prior-hackable for any probability distributions $p \in \Omega$ and $q \in \Omega'$.
\end{coro}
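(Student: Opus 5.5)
The plan is to reduce everything to the entrywise-positivity criterion of Theorem \ref{thm:classical-prior-hacking-channel-forall}. I read the hypothesis as saying that $\chn_1$ is prior-hackable for \emph{all} conclusions $p\in\Delta^{d-1}$ and evidences $q''\in\Delta^{d''-1}$, and $\chn_2$ is prior-hackable for all $p''\in\Delta^{d''-1}$ and $q\in\Delta^{d'-1}$. Applying the implication (\textit{3.})$\Rightarrow$(\textit{2.}) of Theorem \ref{thm:classical-prior-hacking-channel-forall} to each map separately then gives $(\chn_1)_{z,x}>0$ for all $(z,x)\in\Omega''\times\Omega$ and $(\chn_2)_{y,z}>0$ for all $(y,z)\in\Omega'\times\Omega''$.

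Next I would check that the product inherits strict positivity. For any $(y,x)$,
\begin{equation*}
(\chn_2\chn_1)_{y,x}=\sum_{z\in\Omega''}(\chn_2)_{y,z}(\chn_1)_{z,x},
\end{equation*}
and this is a nonempty sum, since $d''\geq 1$, of strictly positive terms, so it is strictly positive. I would also note in passing that $\chn_2\chn_1$ is again column-stochastic, because $e^\tp\chn_2\chn_1=e^\tp\chn_1=e^\tp$, so it is a legitimate transition matrix $\Omega\to\Omega'$ to which Theorem \ref{thm:classical-prior-hacking-channel-forall} applies.

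Finally, the implication (\textit{2.})$\Rightarrow$(\textit{3.}) of Theorem \ref{thm:classical-prior-hacking-channel-forall}, applied to $\chn_2\chn_1$, yields prior-hackability for every $p\in\Delta^{d-1}$ and $q\in\Delta^{d'-1}$. Explicitly, $B_{y,x}=q(y)p(x)$ witnesses condition 3 of Theorem \ref{thm:classical-prior-hacking}, and Algorithm \ref{algo:classical} will find the hacking prior.

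There is essentially no technical obstacle. The only care needed is interpretive: the hypothesis must be read as ``for all tuples,'' since that is what makes Theorem \ref{thm:classical-prior-hacking-channel-forall} applicable. Under the weaker reading of a single fixed tuple for each channel, the claim fails. For example, a channel with a zero entry can be hackable for some particular tuple, yet composing two such channels can preserve that zero.
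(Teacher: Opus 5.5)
Your proposal is correct and follows the paper's argument. The paper's one-line proof likewise observes that $(\chn_2\chn_1)_{y,x}=\sum_z(\chn_2)_{y,z}(\chn_1)_{z,x}>0$ because both factors are entrywise positive, and then invokes Theorem~\ref{thm:classical-prior-hacking-channel-forall}; you additionally make explicit the (\textit{3.})$\Rightarrow$(\textit{2.}) step for each factor and the column-stochasticity check, which the paper leaves implicit. One minor caveat on your aside: $B=qp^\tp$ witnesses condition 3 of Theorem~\ref{thm:classical-prior-hacking} only when $p$ and $q$ have full support, and for boundary $p,q$ one would instead scale the positive submatrix $\chn_{\mathrm{supp}\,q,\,\mathrm{supp}\,p}$. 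This does not affect your proof, since your main line uses Theorem~\ref{thm:classical-prior-hacking-channel-forall} as a black box, exactly as the paper does.
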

\begin{proof}
    $(\chn_2\chn_1)_{y,x} = \sum_{x} (\chn_2)_{y,z}(\chn_1)_{z,x}>0 $ since $(\chn_2)_{y,z}>0$ and $(\chn_1)_{z,x}>0$. Thus, by Theorem \ref{thm:classical-prior-hacking-channel-forall}, $\chn_2\chn_1$ is prior-hackable for all $p,q$.
\end{proof}

The converse, however, does not hold. Channels that individually may have some $p,q$ that cannot be prior-hacked, can be concatenated together to become prior-hackable for all $p,q$. We can also extend the corollary to make some statements on $\chn^n$. 

\begin{coro}\label{coro:concat-coro2}
    For a transition matrix $\chn:\Omega \longrightarrow \Omega$, $\exists n \in \mathds{R} \; s.t. \;\chn^n$ is prior-hackable for all $p,q \in \Delta^{d-1}$ if and only if $\chn$ is primitive (that is, irreducible and aperiodic).
\end{coro}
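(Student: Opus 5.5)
The plan is to reduce the corollary to Theorem \ref{thm:classical-prior-hacking-channel-forall} and then to the Perron--Frobenius characterisation of primitive matrices. By Theorem \ref{thm:classical-prior-hacking-channel-forall}, $\chn^n$ is prior-hackable for all $p,q\in\Delta^{d-1}$ if and only if $(\chn^n)_{y,x}>0$ for all $(y,x)$. So the statement becomes: some power $\chn^n$ is a strictly positive matrix if and only if $\chn$ is primitive. The exponent $n$ must be read as a positive integer, since $\chn^n$ is only a stochastic matrix for $n\in\mathds{N}$. This equivalence is the classical definition and characterisation of primitive nonnegative matrices \cite{herstein1954note-primitive-matrices}. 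I will spell out both directions via the directed graph $G$ with an edge $x\to y$ whenever $\chn_{y,x}>0$, using $(\chn^n)_{y,x}>0$ if and only if there is a walk of length exactly $n$ from $x$ to $y$.

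For the ``only if'' direction, suppose $\chn^n>0$. Then every ordered pair is joined by a walk of length $n$, so $G$ is strongly connected and $\chn$ is irreducible. Positivity also propagates to higher powers: $\chn^{n+1}=\chn\,\chn^n$, and each entry is $\sum_z \chn_{y,z}(\chn^n)_{z,x}$. This is positive because every row $y$ of $\chn$ has some positive entry; otherwise $\chn^n$ would have a zero row, contradicting $\chn^n>0$. Hence $\chn^{n}>0$ and $\chn^{n+1}>0$, which gives closed walks of coprime lengths $n$ and $n+1$ at every vertex, so the period is $1$ and $\chn$ is aperiodic.

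For the ``if'' direction, take $\chn$ irreducible and aperiodic. Fix a vertex $x$. The set of lengths of closed walks at $x$ is closed under addition and has gcd $1$. By the standard number-theoretic (Schur/Frobenius coin) lemma, it contains every integer $m\ge N_x$. Irreducibility gives a walk $x\to y$ of some length $\ell_{xy}\le d-1$. Choose $n\ge \max_x N_x + d-1$; then prepending a closed walk at $x$ of length $n-\ell_{xy}$ yields a walk of length exactly $n$ from $x$ to $y$ for all pairs, so $\chn^n>0$. Applying Theorem \ref{thm:classical-prior-hacking-channel-forall} finishes the proof. Alternatively, Corollary \ref{coro:concat-coro1} gives that all larger powers are then also fully hackable.

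The main obstacle is the ``if'' direction, specifically the gcd/semigroup lemma, which guarantees that all sufficiently long closed walks exist. I would cite it, or Wielandt's bound $n\le (d-1)^2+1$, rather than reprove it. A minor point to flag is the interpretation $n\in\mathds{N}$ in place of the stated $n\in\mathds{R}$.
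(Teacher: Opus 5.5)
Your proposal is correct and follows the paper's route. Both arguments use Theorem \ref{thm:classical-prior-hacking-channel-forall} to reduce the claim to some power $\chn^n$ being entrywise positive, and then identify that with primitivity; the paper simply takes this as the definition and cites standard references, while you also prove the equivalence with irreducibility plus aperiodicity via walk lengths, and your reading $n\in\mathds{N}$ is the right one. One small slip in your aside: Corollary \ref{coro:concat-coro1} needs both factors to be strictly positive, so it only yields the multiples $\chn^{kn}$, and it is your row argument ($\chn$ has no zero row, so $\chn^{m+1}=\chn\,\chn^{m}>0$) that gives all powers $m\ge n$.
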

\begin{proof}
    By definition, a matrix $\chn$ is primitive or regular if and only if $\exists n \in \mathds{R} \; s.t. \;\chn^n(y|x)>0$ for all $x,y$. On the other hand, from Theorem~\ref{thm:classical-prior-hacking-channel-forall}, $\chn^n$ is prior-hackable for all $p,q$ if and only if $\chn^n(y|x)>0$ for all $x,y$. Thus, the two statements are equivalent \cite{meyer2023matrix, herstein1954note-primitive-matrices}. 
\end{proof}

In particular, this may be equivalently stated $\chn$ are not primitive if and only if for all $n$ (for any length of self-concatenations) there is always has some $p,q$ that one cannot prior hack $\chn^n$ for. Though this is certainly an aside, it does give an explicitly Bayesian definition of matrices that are primitive and not primitive.

\subsection{Proof for Theorem~\ref{thm:quantum-surjective}}\label{app:thm-quantum-surjective}
First, we define the operation
\begin{equation}\label{eq:q-prior-hack-channel}
    \mathcal{X}_{\chn,\omega}(\rf)=\sqrt{\rf}\chn^\dag\left[\frac{1}{\sqrt{\chn[\rf]}}\omega\frac{1}{\sqrt{\chn[\rf]}}\right]\sqrt{\rf}.
\end{equation}
This is the nothing more than the Petz recovery map, but with $\rf$ as the input instead. Since the channel $\chn$ is such that $\chn[\rho]$ is full rank for all $\rho$, the map is always well defined for all choices of $\rf$ because $\text{supp}(\omega) \subseteq \text{supp}(\chn[\rf])$ for all choices of $\rf$. What we want to prove is that for a finite-dimensional system, the map $\mathcal{X}_{\chn,\omega}$ is surjective, which means that the whole space of $\rho$ is in the image space of this map and that there is always at least one $\rf$ that gives $\retd(\omega)=\rho$.

We first illustrate the proof for qubit systems. In Bloch sphere terms, for qubits, this map is defined on the Bloch Ball $\mathsf B^3$. We first have to prove two things: that $\mathcal{X}_{\chn,\omega}$ is continuous, and that it is the identity at the boundary of the Bloch sphere $\partial \mathsf B^3=S^2$. The fact that it is continuous can be readily seen from the fact that all of the operations involve continuous functions. The only place in which a discontinuity could have formed is the $\chn[\rf]^{-\frac{1}{2}}$. However, because $\chn[\rf]$ is full rank, the inverse is guaranteed to exist. Moreover, at the boundary of $\mathsf B^3$, $\rf$ is a pure state, thus,
\begin{align}
    \mathcal{X}_{\chn,\omega}(\ket{\rf}\!\bra{\rf})\bigg\vert_{\partial B^3}&=\ket{\rf}\!\bra{\rf}\chn^\dag\left[\frac{1}{\sqrt{\chn[\rf]}}\omega\frac{1}{\sqrt{\chn[\rf]}}\right]\ket{\rf}\!\bra{\rf}\nonumber\\
    &=\ket{\rf}\!\bra{\rf}.
\end{align}
Thus, at the boundary, $\mathcal{X}$ is an identity on the boundary of the Bloch ball. With these two conditions, we can prove that $\mathcal{X}$ is surjective using a topological construction that relies on the no-retraction theorem and is often used in proving Brouwer’s fixed-point theorem. For more background information, we refer the readers to \cite{milnor1997topology}.

Assume that $\mathcal{X}$ is non-surjective. Thus, there is a state $\alpha\in\text{Interior}\,(\mathsf B^3)$ such that $\alpha \notin \text{Image}(\mathcal{X}_{\chn,\omega})$. We can then define another function $f$ that, for each point in $\mathsf B^3$, projects the point along a line that emanates from $\alpha$ to the point onto the boundary $\partial \mathsf B^3$. Thus, the composed map $f \circ \mathcal{X}_{\chn,\omega}$ will take every point in $\mathsf B^3$ and map it continuously to  $\partial \mathsf B^3$, which is a continuous retraction of the ball into its boundary, which is disallowed by the no-retraction theorem, which states that $S^{n-1}$ is not a retract of $\mathsf B^n$. Thus, $\mathcal{X}$ must be surjective.

In higher-dimensional systems, the state space is it is no longer a ball. However, it is still a convex, compact set with a non-empty interior, and is thus homeomorphic to a ball. Similarly, the boundaries are still formed by the pure states which are homeomorphic to a sphere. Therefore, the same proof still applies to higher-dimensional systems.

\begin{figure}[h!]
    \centering
    \begin{subfigure}{0.49\linewidth}
        \centering
        \includegraphics[width=\linewidth]{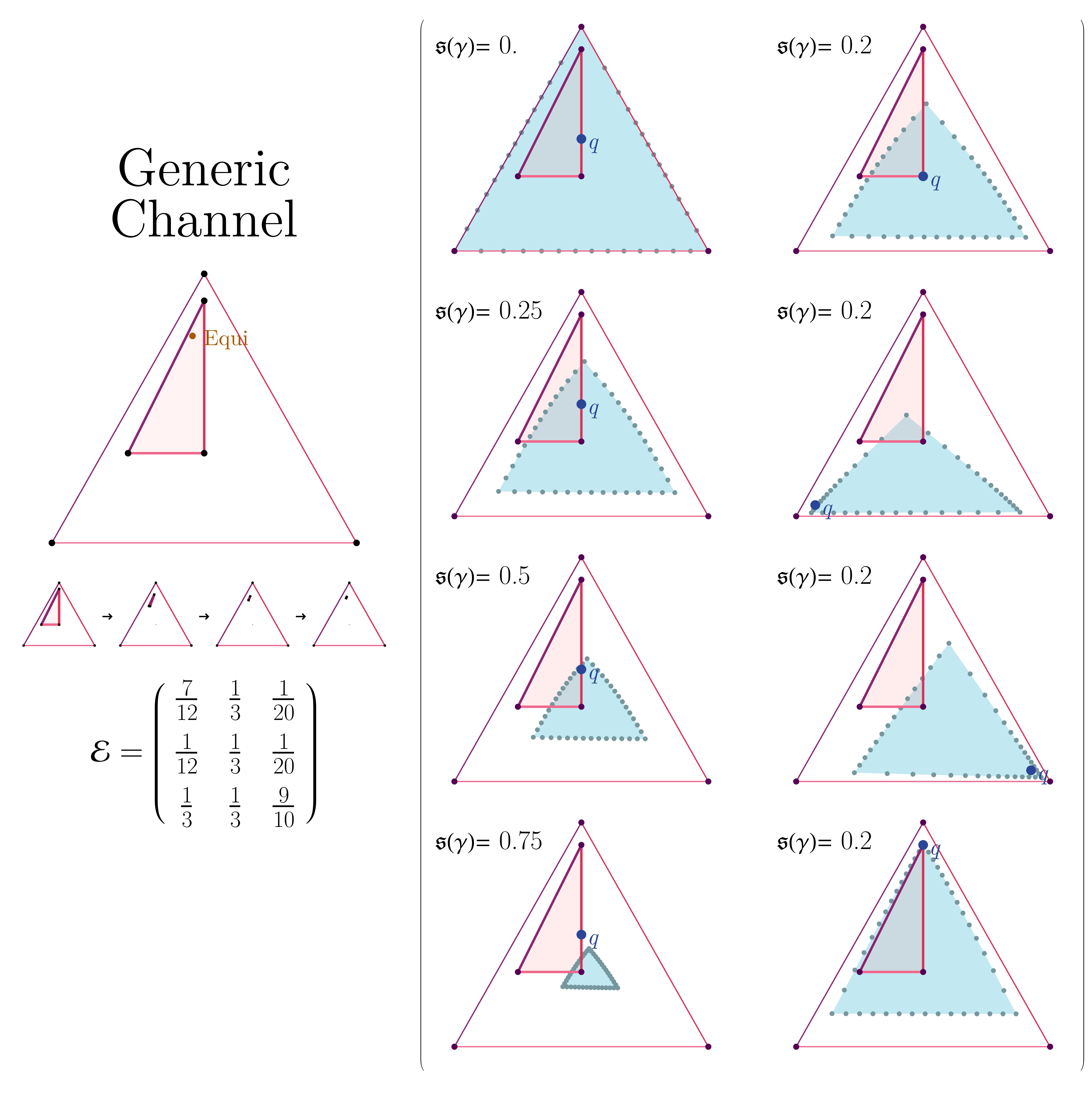}
        \caption{A somewhat generic channel acting on a trit space.}
        \label{fig:cgen}
    \end{subfigure}
    \hfill
    \begin{subfigure}{0.49\linewidth}
        \centering
        \includegraphics[width=\linewidth]{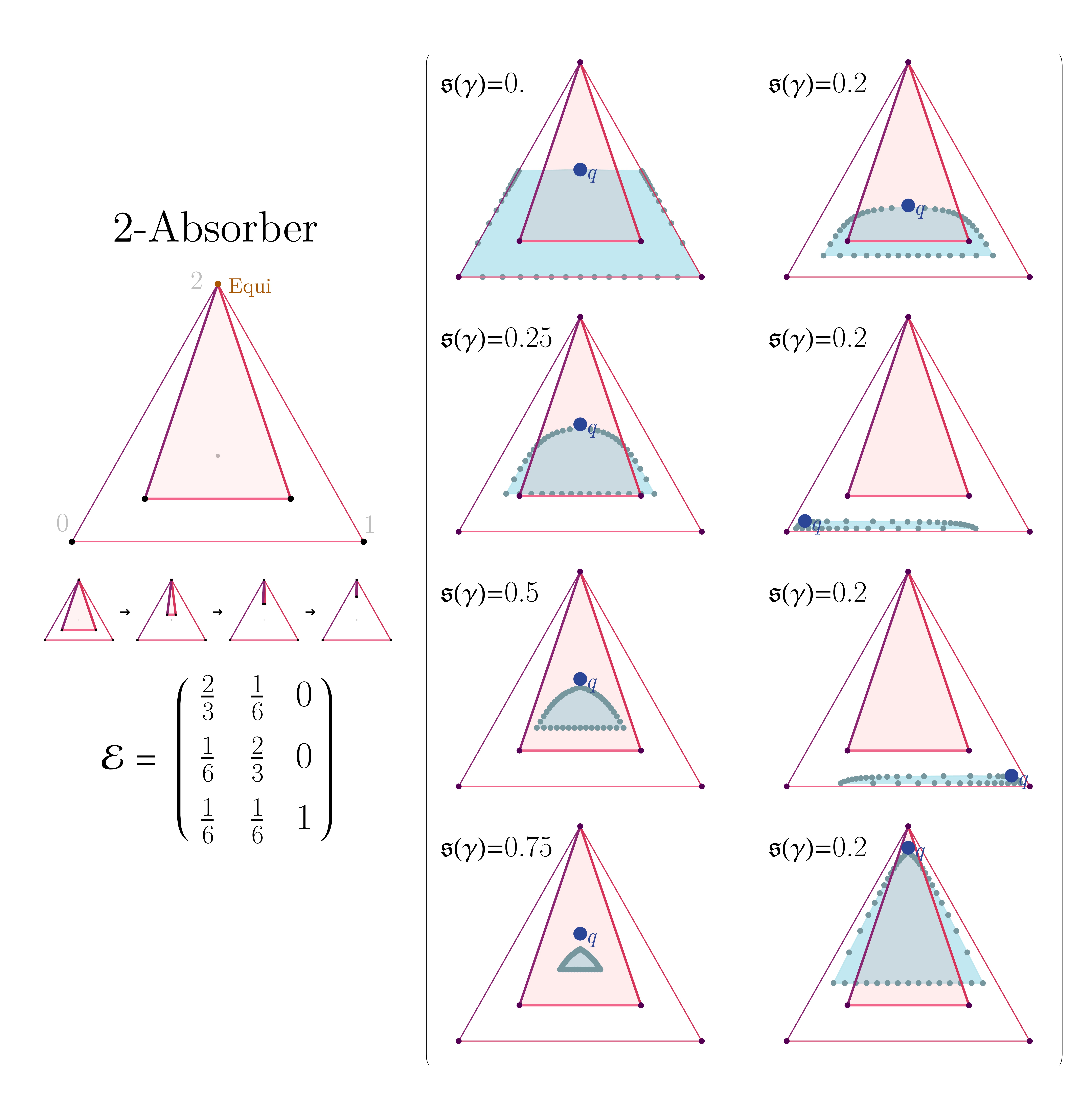}
        \caption{$2$-absorber channel acting on a trit space. Notice how one can only perform a Bayesian update toward states with weights on the $(0,1)$-state that are higher than that of $q$. Thus, prior-hacking is not always possible for any given $p,q$, as per Theorem \ref{thm:classical-prior-hacking-channel-forall}.}
        \label{fig:c2abs}
    \end{subfigure}

    \vspace{0.5cm}

    \begin{subfigure}{0.5\linewidth}
        \centering
        \includegraphics[width=\linewidth]{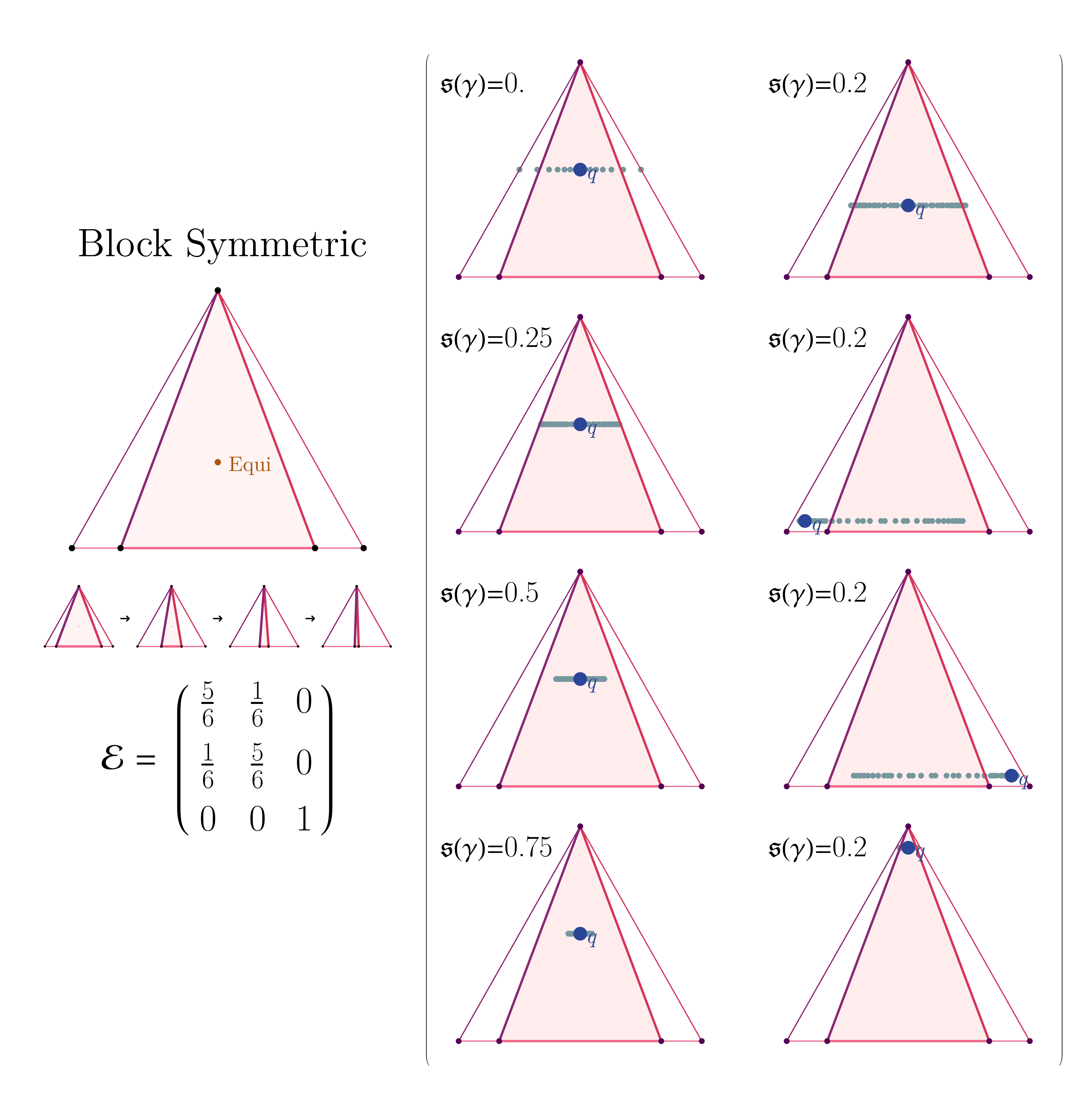}
        \caption{Block-symmetric channel acting on a trit space. Since the channel always holds the probability weight on the $2$-state totally unchanged, the Bayesian update only acts on a line, only updating on a bit-space consistent with the evidence. This illustrates how prior-hacking this channel is not always possible for any given $p,q$, as per Theorem \ref{thm:classical-prior-hacking-channel-forall}.}
        \label{fig:cblocksym}
    \end{subfigure}

    \caption{Image plots of representative channels acting on a trit space. For details, see Section \ref{sec:c-example-text}.}
    \label{fig:c-examples-in-app}
\end{figure}

\subsection{Proof for Theorem~\ref{thm:gp-conjecture}}\label{app:thm-gp-conjecture}

If we look at $\chnds$ from Eq.~\eqref{eq:qm-SB-channel-standard}, we can always do the polar decomposition $\poi = U_0\sqrt{\chn^\dag[\rfdsb]}$ and $\pof = U_T\sqrt{\rfdsb}$ for some unitaries $U_0,U_T$. If $U_0=U_T=\mathds{1}$, then we get back the Hermitian bridge $\chndh$ from Eq.~\eqref{eq:qm-SB-channel-standard-hermitian}. Next, define the operation
\begin{equation}
    \mathcal{Y}_{\chn,\rho}(\rfdsb,U_0,U_T)=U_T\sqrt{\rfdsb}\chn\left[\frac{1}{\sqrt{\chn^\dag[\rfdsb]}}U_0^\dag \rho U_0\frac{1}{\sqrt{\chn^\dag[\rfdsb]}}\right]\sqrt{\rfdsb}U_T^\dag,
\end{equation}
which is simply Eq.~\eqref{eq:qm-SB-channel-standard} but with $\rfdsb,U_0,U_T$ as the inputs. For completeness, we will both prove the theorem for both non-Hermitian (which has already been shown in \cite{friedland2017schrodinger}) and Hermitian $\alpha$ and $\beta$.

First, consider the case that both $\poi$ and $\pof$ are Hermitian, which means $U_0=U_T=\mathds{1}$. The proof of Theorem~\ref{thm:quantum-surjective} (see Section~\ref{app:thm-quantum-surjective})  only requires continuity of $\mathcal{Y}_{\chn,\rho}$ and the fact that $\mathcal{Y}_{\chn,\rho}(\rfdsb)=\rfdsb$ when $\rfdsb$ is pure, both of which are true, given that $\chn$ is positivity improving. Thus, there will always be a $\rfdsb$ such that $\chndh[\rho]=\omega$. In the case that both $\poi$ and $\pof$ are not Hermitian, we can fix $U_0$ and $U_T$ to any values. This does not affect continuity, and the action on the surface is now given by $\chnd[\rho]=U_T\rfdsb U_T^\dag$, which is simply a rigid rotation on the surface of the set, which can be topologically deformed into the identity still. Therefore, by the same logic, using the decomposition $\poi = U_0\sqrt{\chn^\dag[\rfdsb]}$ and $\pof = U_t\sqrt{\rfdsb}$, there will always be a $\rfdsb$ such that $\chnds[\rho]=\omega$, for every fixed value of $U_0,U_T$.

Note that, while we have proven the existence of the solution, the convergence of both Algorithms~\ref{algo:quantum} and \ref{algo:qsb} is still an open question.

\section{Fixed Point Iteration for the Quantum Scenario}\label{app:quantum-ipf-sinkhorn}
Given a channel $\chn$, an evidence $\omega$, and a target state $\rho$, we want to solve the equation
\begin{equation}\label{eq:quantum-prior-hacking}
    \rho = \sqrt{\rf}\chn^\dag\left[\rfd\right]\sqrt{\rf}.
\end{equation}
First, note that for the matrix equation $MNM=L$, one solution for $M$ is
\begin{equation}\label{eq:matrix-sandwich}
M = L^{1/2}(L^{-1/2}N^{-1}L^{-1/2})^{1/2}L^{1/2}.
\end{equation}
Taking the square, we can obtain
\begin{equation}
    \rf=\left[\sqrt{\rho}\left(\frac{1}{\sqrt{\rho}}(\chn^\dag\left[\rfd\right])^{-1}\frac{1}{\sqrt{\rho}}\right)^\frac{1}{2}\sqrt{\rho}\right]^2,
\end{equation}
which we can use as an update rule, similar to the classical scenario. Note that Eq.~\eqref{eq:matrix-sandwich} can be rewritten in a different manner, for example with $L$ in the middle instead of $N$, which will lead to equivalent update rules.


\section{Optimisation problem for the quantum Schrödinger bridge}\label{app:optimisation-sb-quantum}

Unlike the classical scenario, the standard QSB is formulated without starting from a minimisation problem. The question of the distance measure to be minimised to obtain the quantum Schödinger problem was left open in \cite{georgiou2015positive}. On the other hand, the Petz map can be obtained as a solution for the quantum version of the Minimum Change Principle (MCP) \cite{bai2025quantum}. Thus, it is natural to ask whether the optimisation problem for the QSB can be obtained by quantising the MCP.

The quantisation can be done by substituting probability distributions with quantum states, stochastic matrices with CPTP maps, and joint probability distributions with Quantum States Over Time (QSOT). In \cite{bai2025quantum}, the QSOT is defined from a channel and an input state:
\begin{equation}
    \chn \star \rho = \left(\mathds{I}_B \otimes \sqrt{\rho^\tp}\right) C_\chn \left(\mathds{I}_B \otimes \sqrt{\rho^\tp}\right),
\end{equation}
where $C_\chn$ is the Choi–Jamiołkowski isomorphism of the channel $\chn$. The marginals of the QSOT would give the input and output states:
\begin{equation}
    \Tr_B[\chn \star \rho] = \rho, \quad \quad \Tr_A[\chn \star \rho] = \chn[\rho].
\end{equation}
Thus, a straightforward quantisation of Eq.~\eqref{eq:optimisation-problem-sb} is given by
\begin{equation}\label{eq:optimisation-problem-sb-quantum}
\begin{array}{r c l}
    \chnd = & \displaystyle \argmin_{\chnd'} & \displaystyle \textsf{D}(\chnd'\star\rho \| \chn \star \eta). \\ 
        & \text{subject to}                & \displaystyle \Tr_B[\chnd' \star \rho] = \rho \\
        &                            & \displaystyle \Tr_A[\chnd' \star \rho] = \omega
\end{array}
\end{equation}
Here $\eta$ is the input marginal distribution that is in the initial prior process. It is the quantum analogue of $\sum_yA(x,y)$ in the classical version of the problem. However, here we quickly encounter a problem: the solution of this optimisation depends on $\eta$, whereas in the classical problem it does not. In the classical version, we have
\begin{align*}
    \textsf{D}(B\|A) &= \sum_{x,y} B(x,y) \log \left(\frac{B(x,y)}{\chn(y|x)\eta(x)}\right) \nonumber \\
    &= \sum_{x,y} B(x,y) \log B(x,y)-\sum_{x,y}B(x,y) \log\chn(y|x)-\sum_{x}p(x) \log\eta(x).
\end{align*}
Here, the last term is a constant that is fixed by the setup of the problem, thus the solution will not depend on $\eta$. However, the same cannot be said for the quantised version of the problem, regardless of whether one uses the Umegaki relative entropy \cite{umegaki1962conditional}, the Belavkin-Staszewski relative entropy \cite{belavkin1982c}, or the quantum fidelity. Thus, the QSB, either in its generic form $\chnds$ in Eq.~\eqref{eq:qm-SB-channel-standard} or the inference-consistent form $\chnd$ in Eq.~\eqref{eq:ic-QSB} will not be a solution of Eq.~\eqref{eq:optimisation-problem-sb-quantum} as it is currently written, as they both do not depend on $\eta$. With these notes, we leave the problem of formalising an optimisation problem for the Schrödinger bridge open for future exploration.

\section{$\chndh$ does not meet the conditions for Theorem \ref{thm:QSB-is-QPH}}\label{app:Fherm-not-Thm}
Setting $\Xi \mapsto \Gamma$ for Eq.~\eqref{eq:qm-SB-channel-standard-hermitian}, alongside Eq.~\eqref{eq:Fherm-algo-eq}, we get
\begin{align*}
    \mathcal{F}^{\mathsf{\Xi \mapsto\Gamma }}_\mathsf{Herm} [\rho]= \sqrt{\ir{\chn [\rf]}\omega\ir{\chn [\rf]}}\chn\left[\ir{\ir \rf \rho \ir \rf} \rho \ir{\ir \rf \rho \ir \rf}\right]\sqrt{\ir{\chn [\rf]}\omega\ir{\chn [\rf]}} \neq \omega.
\end{align*}

Of course, if it so happens that $[\omega,\chn[\rf]]=0$ and $[\rho,\rf]=0$, then we do have a valid QSB. But this is generally not the case, and so $\chndh$ does not feature inference consistency as per Theorem \ref{thm:QSB-is-QPH}. 

One final remark here is that the symmetry between the Hermitian QSB $\chndh$ and quantum prior hacking may be traced to the classical regime. There, we have the two coupled equations
\begin{align}
    \rfd &= q \oslash (\chn\rf) \\
    \rf &= p \oslash (\chn^\tp\rfd).
\end{align}
When the problem is quantised, the prior hacking approach leads naturally to the relations
\begin{align}
    \rfd &= \mathcal{D}_\omega(\chn[\rf]) \\
    \rf &= \mathcal{D}_\rho'(\chn[\rfd]),
\end{align}
where $\mathcal{D}$ and $\mathcal{D}$ are defined in Eqs~\eqref{eq:quantum-algo-step-1} and \eqref{eq:quantum-algo-step-2}. On the other hand, the Hermitian QSB leads naturally to
\begin{align}
    \rfdsb &= \mathcal{D}_\omega'(\chn[\rfsb]) \\
    \rfsb &= \mathcal{D}_\rho(\chn[\rfdsb]).
\end{align}
Note the slight difference in the choice of the operations. Therefore, these opposing transformations leads to them being symmetrical but inequivalent problems. As previously alluded to, it is clear that if everything commutes (that is, if everything is classical) these two problems do in fact collapse to being the same. 

\section{More Details on Prior hacking for Illustrative Qubit Channels} \label{app:qm-chan}
\subsection{Depolarising channels}\label{app:depolarising-channel}
The depolarising channel is perhaps the most common type of channel in which the image is full rank over the whole domain. It is defined as
\begin{equation}
    \depolarise[\rho_0]=\lambda\frac{\mathds{I}}{2}+(1-\lambda)\rho_0,
\end{equation}
with some parameter $\lambda$ that determines the strength of the channel. Simply put, it is a mixing of the original input state with white noise.

Here we will first summarise the behaviour of the Petz map for the depolarising channel, while the full mathematical expressions can be found in the next sections. First, setting $\lambda=1$, we get $\ret{\depolarise}{\rf}[\omega]\big\rvert_{\lambda=1}=\gamma$, which recovers the erasure channel result of making prior hacking trivial. Conversely, setting $\lambda=0$ yields the unitary channel result with $\ret{\depolarise}{\rf}[\omega]\big\rvert_{\lambda=0} =\vec{\omega}$, i.e., no dependence on $\rf$ and making prior hacking impossible.

Next, we denote the length of the Bloch vector $\rf$ as $|\vec{r}|$, which is simply a measure of the purity of $\rf$. If we evaluate it for a fully mixed state prior ($|\vec{r}|=0$), we get $\ret{\depolarise}{\rf}[\omega]\big\rvert_{|\vec{r}|=0} = (1-\lambda)\omega + \lambda\frac{\mathds{1}}{2}=\omega'$. On the other hand, for a pure state prior ($|\vec{r}|=1$) we have $\ret{\depolarise}{\rf}[\omega]\big\rvert_{|\vec{r}|=1} = \rf$. For a fixed purity value $|\vec{r}|$ in between the two extremes, the set of output states would form a deformed sphere inside the Bloch sphere that is shifted in the $\hat{z}$ direction. In other words, as we shift from $|\vec{r}|=0$ to $|\vec{r}|=1$, this deformed sphere will continuously go from covering just a single point $\omega'$ to the whole Bloch sphere. Because $\ret{\depolarise}{\rf}[\omega]$ is continuous everywhere, there cannot be any `tearing', `puncturing', or sudden `jumps' in this transition. This means that every point in the Bloch sphere must be hit at some point by some value of $\rf$, thus making prior hacking possible. This is the intuition behind Theorem~\ref{thm:quantum-surjective}.

\subsubsection{Deriving the Petz map}
The channel is defined by the operation
\begin{equation}
    \depolarise(\rho)=\lambda\frac{\mathds{I}}{2}+(1-\lambda)\rho,
\end{equation}
with $0\leq\lambda\leq1$, with the extreme cases being a unitary and an erasure channel. The Kraus operators are given by
\begin{align*}
    K_0&=\sqrt{1-\frac{3}{4}\lambda}\mathds{I}, \nonumber\\
    K_i&=\sqrt{\frac{\lambda}{4}}\sigma_i, \qquad i \in \{x,y,z\}.
\end{align*}
Let $\rf$ be a state with the Bloch vector $\vec{r}=(r_x,r_y,r_z)$, with length $|\vec{r}|$. Let $\hat{n}=\frac{\vec{r}}{|\vec{r}|}$. The eigenvectors of $\rf$ are then given by $\ket{\hat{+n}}$ and $\ket{\hat{-n}}$, with the associated eigenvalues $\mu_\pm=\frac{1\pm|\vec{r|}}{2}$. Moreover, the depolarising channel does not change the eigenbasis of the state, thus the eigenvectors of $\depolarise[\rf]$ remain the same, while its associated eigenvalues, denoted by $a_i$, are given by
\begin{equation}
    a_\pm=\frac{\lambda}{2}+(1-\lambda)\mu_\pm=\frac{1\pm|\vec{r}|(1-\lambda)}{2}.
\end{equation}
Finally, define $P_\pm = \ket{\pm\hat{n}}\bra{\pm\hat{n}}$.

The depolarising map is self-adjoint, and thus the Kraus operators of the Petz recovery map are given by
\begin{equation}
    M_\alpha=\rf^{\frac{1}{2}}K_\alpha\depolarise[\rf]^{-\frac{1}{2}}.
\end{equation}
For the identity term, we have $M_0 = \sqrt{1-\frac{3}{4}\lambda}\sum_{j=\{+,-\}}\sqrt{\frac{\mu_j}{a_j}}P_j$. Meanwhile, for the other terms, we have $M_i = \sqrt{\frac{\lambda}{4}}\sum_{j,k=\{+,-\}}\sqrt{\frac{\mu_j}{a_k}}P_j\sigma_iP_k$. Applying the Kraus operators to a state $\omega$, we get
\begin{equation}
    M_0\omega M_0^\dag=\left(1-\frac{3}{4}\lambda\right)\sum_{j,k}\sqrt{\frac{\mu_j\mu_k}{a_ja_k}}P_j\omega P_k,
\end{equation}
for the first operator, and
\begin{align}
    \sum_{i=1}^3M_i\omega M_i^\dag& = \frac{\lambda}{4}\rf^{\frac{1}{2}}\sigma_i\depolarise[\rf]^{-\frac{1}{2}} \omega \depolarise[\rf]^{-\frac{1}{2}}\sigma_i\rf^{\frac{1}{2}} \nonumber \\
    &= \frac{\lambda}{4}\rf^{\frac{1}{2}}\left[2\Tr\left(\depolarise[\rf]^{-\frac{1}{2}}\omega\depolarise[\rf]^{-\frac{1}{2}}\right) -\depolarise[\rf]^{-\frac{1}{2}}\omega\depolarise[\rf]^{-\frac{1}{2}}\right]\rf^{\frac{1}{2}},
\end{align}
for the other operators. The second equality is obtained by applying the Pauli matrix identity $\sum_{i=0}^3\sigma_i A \sigma_i=2\Tr(A)\mathds{1}-A$. The second term is identical to the $M_0\omega M_0^\dag$ term except for the $\lambda$ coefficient, which allows them to be summed up together to $1-\lambda$. The first term can be written as
\begin{align}
    \frac{\lambda}{2}\rf\Tr(\omega\depolarise[\rf]^{-1})=\frac{\lambda}{2}\rf\Tr\left[\frac{\omega P_+}{a_+}+\frac{\omega P_-}{a_-}\right].
\end{align}
Finally, since the channel is spherically symmetric, without loss of generality, we can fix $\omega$ to the Bloch vector $\vec{s}=(0,0,s)$. This will allow us to simplify the first term further into
\begin{align*}
    \frac{\lambda}{2}\rf\Tr\left[\frac{\omega P_+}{a_+}+\frac{\omega P_-}{a_-}\right]&=\frac{\lambda}{4}\rf\Tr\left[\frac{1+n_zs}{a_+}+\frac{1-n_zs}{a_-}\right] \nonumber \\
    &= \lambda\rf\left[\frac{1-|\vec{r}|n_zs(1-\lambda)}{1-(1-\lambda)^2|\vec{r}|^2}\right].
\end{align*}
Finally, the map is then given by
\begin{align}
    \retd(\omega)=\frac{1-\lambda}{2}\sum_{j,k}\sqrt{\frac{\mu_j\mu_k}{a_ja_k}}P_j(\mathds{1}+s\sigma_z) P_k + \lambda\rf\left[\frac{1-|\vec{r}|n_zs(1-\lambda)}{1-(1-\lambda)^2|\vec{r}|^2}\right].
\end{align}

Let us expand the first term. The terms that are diagonal in the eigenbasis of $\rf$, given by $j=k$ give
\begin{equation}
    \frac{\mu_\pm}{a_\pm}P_\pm(\mathds{1}+s\sigma_z)P_\pm=\frac{\mu_\pm}{a_\pm}(1\pm sn_z)P_\pm,
\end{equation}
while the off-diagonal terms with $j\neq k$ would yield $\sqrt{\frac{\mu_+\mu_-}{a_+a_-}}P_\pm(\mathds{1}+s\sigma_z)P_\mp=s\sqrt{\frac{\mu_+\mu_-}{a_+a_-}}P_\pm \sigma_zP_\mp$. We then apply the identity $P_+ \sigma_zP_- + P_- \sigma_zP_+ = \sigma_z-n_z(\hat{n}\cdot \vec{\sigma})$ to simplify it to
\begin{equation}
    s\sqrt{\frac{\mu_+\mu_-}{a_+a_-}}(\sigma_z-n_z(\hat{n}\cdot \vec{\sigma})).
\end{equation}

The Bloch vector of the final state is given by $\Tr(\vec{\sigma}\retd(\omega))$. The $\rf$-diagonal terms give $\frac{1-\lambda}{2}\left[\left(\frac{\mu_+}{a_+}-\frac{\mu_-}{a_-}\right)+sn_z\left(\frac{\mu_+}{a_+}+\frac{\mu_-}{a_-}\right)\right]\hat{n}$, the off-diagonal terms give $(1-\lambda)s\sqrt{\frac{\mu_+\mu_-}{a_+a_-}}(\hat{z}-n_z\hat{n})$, while the final term give $\lambda\left[\frac{1-|\vec{r}|n_zs(1-\lambda)}{1-(1-\lambda)^2|\vec{r}|^2}\right]\vec{r}$. Denote the output vector to be $\vec{b}_{\depolarise,\rf,\omega}$. Combining the three terms, we have
\begin{align}
    \vec{b}_{\depolarise,\rf,\omega}&=
    \frac{1-\lambda}{2}\left[\left(\frac{\mu_+}{a_+}-\frac{\mu_-}{a_-}\right)+sn_z\left(\frac{\mu_+}{a_+}+\frac{\mu_-}{a_-}\right)\right]\hat{n} + 
    (1-\lambda)s\sqrt{\frac{\mu_+\mu_-}{a_+a_-}}(\hat{z}-n_z\hat{n}) + 
    \lambda\left[\frac{1-|\vec{r}|n_zs(1-\lambda)}{1-(1-\lambda)^2|\vec{r}|^2}\right]\vec{r} \nonumber \\
    &=
    \frac{1-\lambda}{2}\left[\left(\frac{\mu_+}{a_+}-\frac{\mu_-}{a_-}\right)+sn_z\left(\sqrt{\frac{\mu_+}{a_+}}-\sqrt{\frac{\mu_-}{a_-}}\right)^2\right]\hat{n} + 
    (1-\lambda)s\sqrt{\frac{\mu_+\mu_-}{a_+a_-}}\hat{z} + 
    \lambda\left[\frac{1-|\vec{r}|n_zs(1-\lambda)}{1-(1-\lambda)^2|\vec{r}|^2}\right]\vec{r}.
\end{align}

\subsubsection{Finding the limits}
Now we will derive the limits of the map as outlined. First, setting $\lambda=1$ recovers the erasure channel result with
\begin{equation}
    \vec{b}_{\depolarise,\rf,\omega} \bigg\rvert_{\lambda=1} = \vec{r},
\end{equation}
meaning that to obtain any evidence $\rho$ we just need to set $\rf=\rho$. Meanwhile, the other extreme of $\lambda=0$ would completely erase the dependence on $\rf$ :
\begin{equation}
    \vec{b}_{\depolarise,\rf,\omega} \bigg\rvert_{\lambda=0} = s\hat{z}=\vec{s},
\end{equation}
which makes prior hacking impossible. Next, when we evaluate it at $|\vec{r}|=0$, we get $\mu_\pm = a_\pm = \frac{1}{2}$ and the first and third term go to zero, leaving only
\begin{equation}
    \vec{b}_{\depolarise,\rf,\omega} \bigg\rvert_{|\vec{r}|=0} = (1-\lambda)s \hat{z}.
\end{equation}
This is simply the Bloch vector associated with the state $\omega'=(1-\lambda)\omega + \lambda\frac{\mathds{1}}{2}$. On the other hand, when we evaluate $|\vec{r}|=1$, we get
\begin{equation}
    \vec{b}_{\depolarise,\rf,\omega} \bigg\rvert_{|\vec{r}|=1} = \vec{r}.
\end{equation}

\subsection{Dephasing channels}

The dephasing channel $\dephase$ is given by Kraus operators
\begin{equation}
    Z_0 = \begin{pmatrix} 1 & 0 \\ 0 & \sqrt{1-\lambda} \end{pmatrix}, \quad Z_1 = \begin{pmatrix} 0 & 0 \\ 0 & \sqrt{\lambda} \end{pmatrix}.
\end{equation}
Intuitively, it reduces the off-diagonal terms in the input state, bringing it closer to a decoherent state. It is an example of a channel that does not fulfil the requirement of Theorem~\ref{thm:quantum-surjective}. For $\rho_0=\ket{i}\bra{i}$, $\dephase[\ket{i}\bra{i}]=\ket{i}\bra{i}$ and is thus not full rank. Because of this reason, it is quite straightforward to see that prior hacking is impossible if the target conclusion is either one of the pure computational state $\rho=\ket{i}\bra{i}$. Normally a pure choice conclusion is prior-hackable by setting $\rf$ to be the same pure state. However, in the case for this channel, the Petz map is undefined for $\rf=\ket{i}\bra{i}$ due to its rank deficiency. However, there are other examples of a combination of $(\rho,\omega)$ that is prior-hackable still. This is most easily seen in the \textit{completely dephasing channel}, given by taking $\lambda=1$.

\begin{theorem}\label{thm:dephasing-channel}
    For a completely dephasing channel $\dephase$, an evidence $\omega$, and a decoherent conclusion $\rho$, prior hacking is possible if and only if both $\omega$ and $\rho$ have the same diagonals in the $Z$-basis, i.e., $\rho=\dephase[\omega]$.
\end{theorem}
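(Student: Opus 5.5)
The plan is to compute the Petz map of the completely dephasing channel in closed form and then solve the hacking equation directly. For $\lambda=1$ we have $\dephase[X]=\sum_i \ket{i}\!\bra{i}X\ket{i}\!\bra{i}$, which is self-adjoint, $\dephase^\dag=\dephase$. Write $\Delta_\rf=\dephase[\rf]$ and $\Delta_\omega=\dephase[\omega]$ for the diagonal parts.

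\emph{Step 1: closed form of the Petz map.} Since $\Delta_\rf^{-1/2}$ is diagonal, it commutes with the dephasing. Hence
\begin{equation*}
\dephase\big[\Delta_\rf^{-1/2}\,\omega\,\Delta_\rf^{-1/2}\big]=\Delta_\rf^{-1}\Delta_\omega=:K ,
\end{equation*}
where $K=\mathrm{diag}(\omega_{ii}/\rf_{ii})$ is diagonal and positive semidefinite. Substituting into Eq.~\eqref{eq:petz-map} gives
\begin{equation*}
\ret{\dephase}{\rf}[\omega]=\sqrt{\rf}\,K\,\sqrt{\rf}.
\end{equation*}
As a sanity check, $\Tr(\rf K)=\sum_i\omega_{ii}=1$. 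For the Petz map to be defined we need $\mathrm{supp}(\omega)\subseteq\mathrm{supp}(\Delta_\rf)$. Here I will use positivity of $\omega$: if $\omega_{ii}=0$, then the whole $i$-th row and column of $\omega$ vanish. Consequently, supports can be tracked purely through diagonal entries, and inverses are read as generalized inverses on supports.

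\emph{Step 2 (if).} Suppose $\rho=\Delta_\omega$ and choose $\rf=\rho$. Then $\Delta_\rf=\rho$ is diagonal with the same support as $\Delta_\omega$, which contains $\mathrm{supp}(\omega)$ by the remark in Step 1. On this support $K$ acts as the identity, so $\sqrt\rho\,K\sqrt\rho=\rho$, and hacking succeeds.

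\emph{Step 3 (only if).} Suppose $\rho$ is diagonal and $\sqrt\rf\,K\sqrt\rf=\rho$ for some admissible $\rf$. Set $M=\sqrt\rf\succeq0$, so that $MKM=\rho$ is a Riccati-type equation with $K$ and $\rho$ both diagonal. I would first restrict to the support of $\Delta_\rf$, where $K$ is invertible. There I would invoke uniqueness of the positive semidefinite solution of $MKM=L$, namely
\begin{equation*}
M=K^{-1/2}\big(K^{1/2}LK^{1/2}\big)^{1/2}K^{-1/2},
\end{equation*}
which is the same formula as Eq.~\eqref{eq:matrix-sandwich}. Uniqueness follows because $K^{1/2}MK^{1/2}$ must be the unique PSD square root of $K^{1/2}\rho K^{1/2}$. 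Since $K$ and $\rho$ commute, this gives $M=K^{-1/2}\rho^{1/2}$, so $\rf=\rho K^{-1}$ is diagonal. Reading off diagonal entries, $\rf_{ii}=\rho_{ii}\rf_{ii}/\omega_{ii}$, hence $\rho_{ii}=\omega_{ii}$ wherever $\rf_{ii}>0$.

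Indices with $\rf_{ii}=0$ still need handling. There $\omega_{ii}=0$ by the support condition, and $\rho_{ii}=(M K M)_{ii}=0$ because $M$ has a vanishing $i$-th row. This yields $\rho=\Delta_\omega=\dephase[\omega]$.

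The main obstacle I expect is making Step 3 rigorous when $\rf$ or $\omega$ is rank-deficient. This requires checking that the restriction to $\mathrm{supp}(\Delta_\rf)$ is legitimate, that is, that $M$ block-decomposes compatibly with this support. It also requires that the uniqueness of the PSD Riccati solution holds there. The algebra in Steps 1 and 2 should be routine.
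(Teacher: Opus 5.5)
Your Step 3 rests on a claim that is false in general. On $\mathrm{supp}(\Delta_\rf)$ the diagonal matrix $K=\Delta_\rf^{-1}\Delta_\omega$ has entries $\omega_{ii}/\rf_{ii}$, and these vanish wherever $\omega_{ii}=0<\rf_{ii}$. The support condition only gives $\mathrm{supp}(\Delta_\omega)\subseteq\mathrm{supp}(\Delta_\rf)$, not equality. Your two cases ($\rf_{ii}>0$ with $K$ assumed invertible, and $\rf_{ii}=0$) therefore miss the indices with $\rf_{ii}>0=\omega_{ii}$. On those indices the uniqueness of the PSD solution of $MKM=L$ is not available. This already happens for qubits: $\omega=\ket{0}\!\bra{0}$ with any full-rank $\rf$ gives $K=\mathrm{diag}(1/\rf_{00},0)$. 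In $d\ge 3$ your intermediate conclusion ``$\rf=\rho K^{-1}$ is diagonal'' is actually false. Take $\omega=\ket{0}\!\bra{0}$ and $\rf=\tfrac12\ket{0}\!\bra{0}+\tfrac12\ket{\psi}\!\bra{\psi}$ with $\ket{\psi}=(\ket{1}+\ket{2})/\sqrt2$. Then $\dephase[\rf]$ is full rank, $K=\mathrm{diag}(2,0,0)$, and $\sqrt\rf K\sqrt\rf=\ket{0}\!\bra{0}$ is decoherent although $\rf$ is coherent. So this is not a bookkeeping issue with pseudo-inverses; the last step of the argument has to change.

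The repair is to split along $S=\{i:\omega_{ii}>0\}$ instead, where $K=K_S\oplus 0$ with $K_S\succ 0$. In block form,
\begin{equation*}
M=\begin{pmatrix}A&B\\B^\dagger&C\end{pmatrix},\qquad \rho=MKM=\begin{pmatrix}AK_SA&AK_SB\\B^\dagger K_SA&B^\dagger K_SB\end{pmatrix}.
\end{equation*}
\begin{itemize}
\item Your uniqueness lemma applied to $AK_SA=\rho_S$ gives $A=K_S^{-1/2}\rho_S^{1/2}$.
\item Since $\rho$ is diagonal, $AK_SB=0$, which kills the rows of $B$ with $\rho_{ii}>0$. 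Rows with $\rho_{ii}=0$ have $M_{ii}=A_{ii}=0$ and so vanish by positivity. Hence $B=0$.
\item Then $\rho_{S^c}=B^\dagger K_SB=0=(\Delta_\omega)_{S^c}$.
\item For $i\in S$, $\rf_{ii}=A_{ii}^2=\rho_{ii}\rf_{ii}/\omega_{ii}$ with $\rf_{ii}>0$, so $\rho_{ii}=\omega_{ii}$.
\end{itemize}
The block $C$ stays free, which is exactly the freedom used in the counterexample.

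The paper works only with qubits and avoids all this by a direct entry computation. With $\sqrt\rf=\begin{pmatrix}x&y\\y^*&z\end{pmatrix}$, the off-diagonal entry of $\sqrt\rf D\sqrt\rf$ is $y(xd_0+zd_1)$. This is nonzero whenever $y\neq 0$, because then $x,z>0$ and $d_0,d_1$ are not both zero, so a singular $D$ needs no separate treatment. Your repaired argument is the $d$-dimensional extension of this.

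For the ``if'' direction, the paper regards the Petz map at $\rf=\ket{i}\!\bra{i}$ as undefined for this channel. It is therefore safer to take a full-rank diagonal prior such as $\mathds{1}/d$, which gives $\sqrt\rf K\sqrt\rf=\Delta_\omega$ with ordinary inverses, rather than $\rf=\rho$.
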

\begin{proof}
   The prior hacking equation to be solved is given by
    \begin{equation}
        \rho = \sqrt{\rf}\dephase\left[\frac{1}{\sqrt{\dephase[\rf]}}\omega\frac{1}{\sqrt{\dephase[\rf]}}\right]\sqrt{\rf}.
    \end{equation}
    Note that the dephasing channel is self-adjoint. We can further simplify it by:
    \begin{align}\label{eq:dephasing-prior-hacking}
        \rho &= \sqrt{\rf}\dephase\left[\frac{1}{\sqrt{\dephase[\rf]}}\omega\frac{1}{\sqrt{\dephase[\rf]}}\right]\sqrt{\rf} \nonumber \\
        &= \sqrt{\rf}\dephase[\rf])^{-1/2}\dephase[\omega](\dephase[\rf])^{-1/2}\sqrt{\rf} \nonumber \\
        &= \sqrt{\rf}(\dephase[\rf])^{-1}\dephase[\omega]\sqrt{\rf}.
    \end{align}
    Now, define
    \begin{equation}
        D = (\dephase[\rf])^{-1}\dephase[\omega] := \begin{pmatrix} d_0 & 0 \\ 0 & d_1 \end{pmatrix},
    \end{equation}
    with $d_0,d_1\geq0$. Moreover, they cannot be both zero, as that requires $\omega$ to have zero trace. Now, write
    \begin{equation}
        \sqrt{\rf}:=\begin{pmatrix} x & y \\ y^* & z \end{pmatrix},
    \end{equation}
    with $x,z\geq0$ and $xz\geq|y|^2$. Then, we can get
    \begin{equation}
        \rho = \sqrt{\rf}D\sqrt{\rf} = \begin{pmatrix} x^2 d_0 + |y|^2 d_1 & y(x d_0 + z d_1) \\ y^*(x d_0 + z d_1) & |y|^2 d_0 + z^2 d_1 \end{pmatrix}.
    \end{equation}
    If $\rf$ has any initial coherence, then $|y|>0$, which also implies $x,z>0$. However, since $\rho$ is decoherent, this means that $y(x d_0 + z d_1)=0$. Since it is impossible for both $d_0$ and $d_1$ to be zero, this presents a contradiction. Thus, $\rf$ must be decoherent too. However, if $\rf$ is decoherent, then $\dephase[\omega]$ commutes with $\rf$ and Eq.~\eqref{eq:dephasing-prior-hacking} can be written as
    \begin{equation}
        \rho = \dephase[\omega].
    \end{equation}
    Thus, prior hacking is possible if and only if both $\rho$ and $\omega$ are in the same $Z$-coordinate in the Bloch sphere.
\end{proof}



\end{document}